\providecommand{\tabularnewline}{\\}
\providecommand{\algorithmname}{Algorithm}
\begin{document}
%
% --- Author Metadata here ---
%\conferenceinfo{MMSys'14}{March 19-21, Singapore.}
%\CopyrightYear{2007} % Allows default copyright year (20XX) to be over-ridden - IF NEED BE.
%\crdata{0-12345-67-8/90/01}  % Allows default copyright data (0-89791-88-6/97/05) to be over-ridden - IF NEED BE.
% --- End of Author Metadata ---

\pretolerance=2000%avoid line-end hyphens

\title{Streaming Video over HTTP with Consistent Quality}
%\subtitle{[Extended Abstract]
%\titlenote{A full version of this paper is available as
%\textit{Author's Guide to Preparing ACM SIG Proceedings Using
%\LaTeX$2_\epsilon$\ and BibTeX} at
%\texttt{www.acm.org/eaddress.htm}}}
%
% You need the command \numberofauthors to handle the 'placement
% and alignment' of the authors beneath the title.
%
% For aesthetic reasons, we recommend 'three authors at a time'
% i.e. three 'name/affiliation blocks' be placed beneath the title.
%
% NOTE: You are NOT restricted in how many 'rows' of
% "name/affiliations" may appear. We just ask that you restrict
% the number of 'columns' to three.
%
% Because of the available 'opening page real-estate'
% we ask you to refrain from putting more than six authors
% (two rows with three columns) beneath the article title.
% More than six makes the first-page appear very cluttered indeed.
%
% Use the \alignauthor commands to handle the names
% and affiliations for an 'aesthetic maximum' of six authors.
% Add names, affiliations, addresses for
% the seventh etc. author(s) as the argument for the
% \additionalauthors command.
% These 'additional authors' will be output/set for you
% without further effort on your part as the last section in
% the body of your article BEFORE References or any Appendices.

\numberofauthors{6} %  in this sample file, there are a *total*
% of EIGHT authors. SIX appear on the 'first-page' (for formatting
% reasons) and the remaining two appear in the \additionalauthors section.
%
\author{
% You can go ahead and credit any number of authors here,
% e.g. one 'row of three' or two rows (consisting of one row of three
% and a second row of one, two or three).
%
% The command \alignauthor (no curly braces needed) should
% precede each author name, affiliation/snail-mail address and
% e-mail address. Additionally, tag each line of
% affiliation/address with \affaddr, and tag the
% e-mail address with \email.
%
 %1st. author
\alignauthor
Zhi Li\\
       \affaddr{Cisco}\\
       \affaddr{170 West Tasman Dr.}\\
       \affaddr{San Jose, CA 95134, USA}\\
       \email{zhil2@cisco.com}
% 2nd. author
\alignauthor
Ali C. Begen\\
       \affaddr{Cisco Canada}\\
       \affaddr{181 Bay St., Suite 3400}\\
       \affaddr{Toronto, ON M5J 2T3, Canada}\\
       \email{abegen@cisco.com}
% 3rd. author
\alignauthor 
Joshua Gahm\\
       \affaddr{Cisco}\\
       \affaddr{1414 Massachusetts Ave.}\\
       \affaddr{Boxborough, MA 01719, USA}\\
       \email{jgahm@cisco.com}
\and  % use '\and' if you need 'another row' of author names
% 4th. author
\alignauthor Yufeng Shan\\
       \affaddr{Cisco}\\
       \affaddr{1414 Massachusetts Ave.}\\
       \affaddr{Boxborough, MA 01719, USA}\\
       \email{yshan@cisco.com}
% 5th. author
\alignauthor Bruce Osler\\
       \affaddr{Cisco}\\
       \affaddr{1414 Massachusetts Ave.}\\
       \affaddr{Boxborough, MA 01719, USA}\\
       \email{brosler@cisco.com}
% 6th. author
\alignauthor David Oran\\
       \affaddr{Cisco}\\
       \affaddr{55 Cambridge Pkwy, Suite 101}\\
       \affaddr{Cambridge, MA 02142, USA}\\
       \email{oran@cisco.com}
}

% There's nothing stopping you putting the seventh, eighth, etc.
% author on the opening page (as the 'third row') but we ask,
% for aesthetic reasons that you place these 'additional authors'
% in the \additional authors block, viz.
%\additionalauthors{Additional authors: John Smith (The Th{\o}rv{\"a}ld Group,
%email: {\texttt{jsmith@affiliation.org}}) and Julius P.~Kumquat
%(The Kumquat Consortium, email: {\texttt{jpkumquat@consortium.net}}).}
%\date{30 July 1999}
% Just remember to make sure that the TOTAL number of authors
% is the number that will appear on the first page PLUS the
% number that will appear in the \additionalauthors section.

\maketitle
\begin{abstract}
In conventional HTTP-based adaptive streaming (HAS), a video source is encoded at multiple levels of constant bitrate representations, and a client makes its representation selections according to the measured network bandwidth. While greatly simplifying adaptation to the varying network conditions, this strategy is not the best for optimizing the video quality experienced by end users. Quality fluctuation can be reduced if the natural variability of video content is taken into consideration. In this work, we study the design of a client rate adaptation algorithm to yield consistent video quality. We assume that clients have visibility into incoming video within a finite horizon. We also take advantage of the client-side video buffer, by using it as a breathing room for not only network bandwidth variability, but also video bitrate variability. The challenge, however, lies in how to balance these two variabilities to yield consistent video quality without risking a buffer underrun. We propose an optimization solution that uses an online algorithm to adapt the video bitrate step-by-step, while applying dynamic programming at each step. We incorporate our solution into PANDA -- a practical rate adaptation algorithm designed for HAS deployment at scale.
\end{abstract}

% A category with the (minimum) three required fields
\category{C.2.4}{Computer-Communication Networks}{Distributed
applications} 
%\category{C.4}{Computer Systems
%Organization}{Performance of Systems}

\terms{Design, Performance}

\keywords{Adaptation, DASH, HTTP, Video, Quality, QoE} % NOT required for Proceedings

\section{Introduction}

Over the past few years, we have witnessed that streaming video over
the Internet is converging towards a new paradigm named HTTP-based
adaptive streaming (HAS), also dubbed as dynamic adaptive streaming
over HTTP (DASH).

In an HAS system, a video source is chopped into short chunks of a
few seconds each (which we will also refer to as segment in this paper).
Every segment is independently encoded (or transcoded from a single
master high-quality source) at several different bitrates, and the
output representations are stored at a server from which clients fetch
the segments. Common practice is for the encoder/transcoder to employ
constant bitrate (CBR), resulting in a set of tiers, or ``levels''
of video output. A client application fetches the segments from the
server sequentially using plain HTTP GETs, estimates the available
bandwidth using measurements of the downloading performance, and adapts
the level selection of the next segment to fetch at the completion
of the current segment. Typically, tens of seconds of downloaded content
are buffered at the client to accommodate bandwidth variability. A
viable client rate adaptation algorithm must fetch the video segments
to make best use of the available bandwidth, while without risking
to drain the client buffer and causing video playout stalls.

\begin{figure}
\begin{centering}
\hspace{-0in} 
\par\end{centering}

\begin{centering}
\begin{minipage}[t]{1\columnwidth}%
\begin{center}
\vspace{-0.22in}

\par\end{center}

\begin{center}
\includegraphics[scale=0.16]{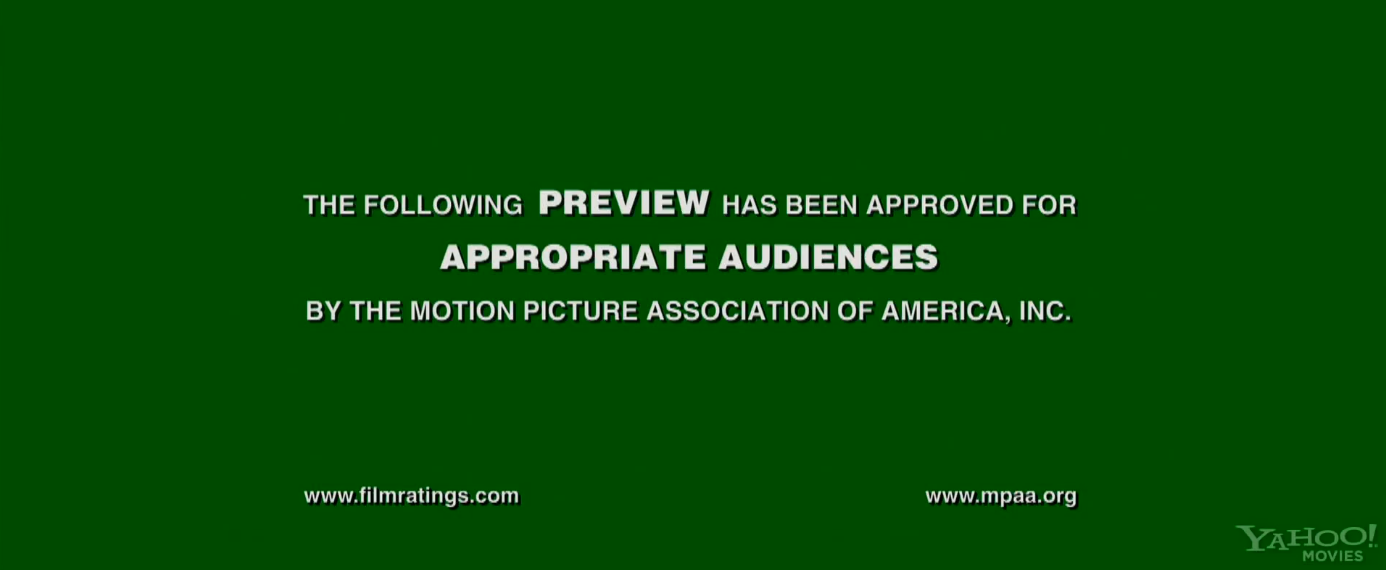} 
\par\end{center}%
\end{minipage}
\par\end{centering}

\begin{centering}
\vspace{0in}
 \hspace{-0in} 
\par\end{centering}

\begin{centering}
\begin{minipage}[t]{1\columnwidth}%
\begin{center}
\vspace{-0.22in}

\par\end{center}

\begin{center}
\includegraphics[scale=0.16]{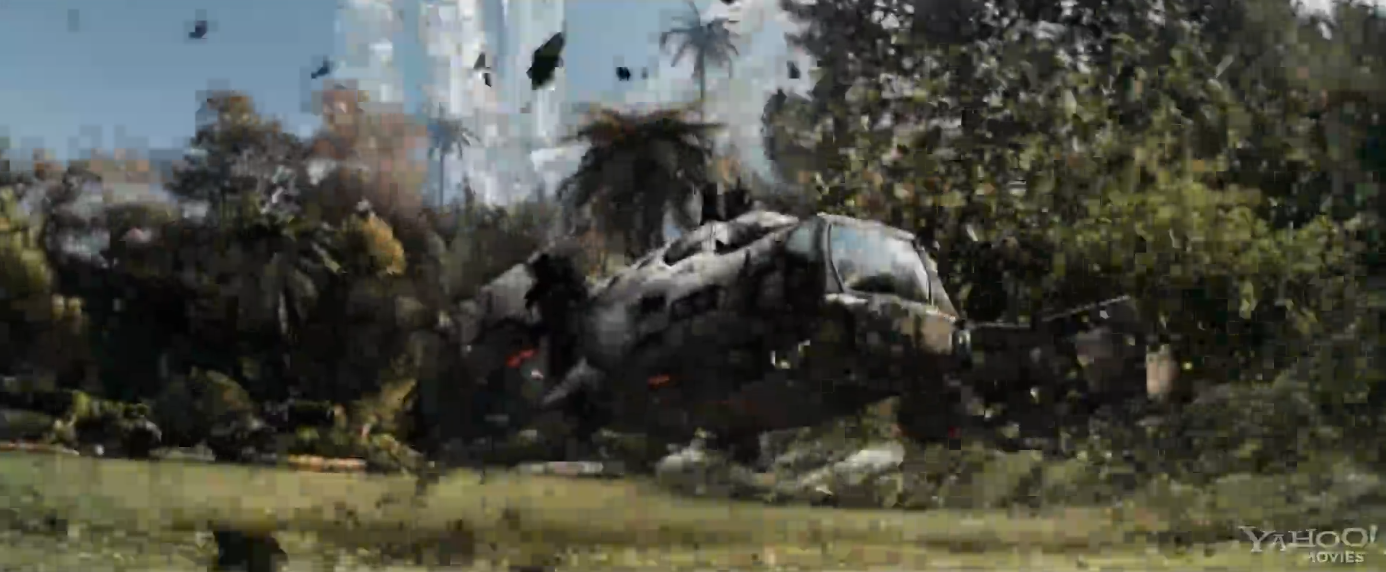} 
\par\end{center}%
\end{minipage}
\par\end{centering}

\begin{centering}
\vspace{0.15in}

\par\end{centering}

\vspace{-0.1in}

\centering{}\caption{Two screenshots from the decoded video of a HAS session with constant
network bandwidth. The video is encoded at multiple CBR levels.}

\label{Flo:2screenshots} \vspace{-0.15in}
 
\end{figure}

The conventional rate adaptation algorithms select the next segment
only based on the video \emph{bitrate} information, but not directly
on the video \emph{quality} of the segment. Thus, during a session,
even if the available network bandwidth stays constant, the delivered
video quality would vary from a high-motion or high-complexity scene
to a low-motion or low-complexity scene. For example, in Figure \ref{Flo:2screenshots},
we show two screenshots from a decoded video of an HAS session streamed
over a constant-bandwidth link. The first screenshot is from the preview
title, which is static and of low complexity. The second one is from
a fairly complex and dynamic scene. Not surprisingly, with CBR encoding
and bitrate-based adaptation, the second screenshot yields much lower
visual quality than the first one.

If we could ``steal'' some bits from the first scene and ``stuff''
them into the second one, the overall viewing experience would have
been better. With this in mind, everything can boil down to an optimization
problem that \emph{temporally} \emph{allocates bits among the video
segments} \emph{to yield an optimal overall quality.} This can be
considered as the basic rationale behind this work.

We define the optimization objective to be an alpha-fairness utility
function \cite{kelly98,srikant04} of some quality metric (for example,
MSE/PSNR \cite{psnr}, Sarnoff/PQR \cite{sarnoff}, MS-SSIM \cite{wang2004image,wang2003multiscale},
VQM \cite{pinson2004new}, STRRED \cite{bovik_strred}), which is
generic enough to cover a range of candidates. Heuristically, this
objective function could balance between total quality and quality
variability.

Besides the bandwidth constraint, the optimization problem is also
subject to two other constraints: 
\begin{itemize}
\item First, the optimization is \emph{myopic} -- it does not know the available
bandwidth in the future,%
\footnote{In this work, we do not attempt to introduce a statistical channel
model and predict the available bandwidth in the future. This allows
our algorithm to be deployed without any assumed knowledge on the
network characteristics. Further improvement can be made if a statistical
channel model is incorporated if appropriate.%
} and furthermore, in light of live streaming, we assume that the client
has visibility into incoming video segments (including both bitrate
and quality information) within a \emph{finite} horizon. 
\item Second, we make use of the client-side video buffer as a breathing
room for video bitrate variability, in a way that the buffer should
neither be completely drained nor fill above a threshold. If the buffer
is completely drained, the playout will stall, which is probably the
worst event for an end user's experience. Typically, to also accommodate
bandwidth variability, the client buffer size should be bounded above
some minimum level (for example, several segments). On the other hand,
due to end-to-end latency in live streaming, or device memory limit,
or simply economic reasons, the buffer size should also be bounded
below some maximum level. 
\end{itemize}
We propose a solution that combines an online algorithm with dynamic
programming. The online algorithm adapts the video bitrate step-by-step,
and at each step we use dynamic programming to solve a constrained
optimization subproblem within a sliding window. The dynamic programming
solution allows us to turn a combinatorial problem into something
solvable in polynomial time. To our advantage is that, in HAS, as
the available bitrate is discrete, it well fits into the dynamic programming
framework.

It is worth noting that, our proposed optimization solution should
reside in a rate adaptation algorithm at the client side. It is fully
orthogonal to the server-side video encoding. For example, in principle\emph{,}
it works with stored video either CBR or variable-bitrate (VBR) encoded
at each level. The only needed architectural change is to convey the
video quality information to the client in some way (e.g., via the
manifest file or an out-of-band approach). 

For a sneak preview of what our algorithm is able to achieve, please
refer to \cite{cqsamples} for some online sample videos.

In the rest of the paper, we first introduce a simple example to illustrate
the intuition (Section \ref{sec:A-Simple-Example}). We then formally
state the problem model and formulate the optimization problem (Section
\ref{sec:Problem-Model}). We derive the dynamic programming solution
for a special case (Section \ref{sec:Dynamic-Programming-Algorithm}),
and use it as a building block for the general online algorithm (Section
\ref{sec:Online-Algorithm}). Then, we present how to incorporate
the optimization solution into PANDA -- a practical client rate adaptation
algorithm designed for large-scale HAS deployment (Section \ref{sec:Incorporation-into-Client}).
Lastly, we present performance evaluation (Section \ref{sec:Performance-Evaluation})
and discuss related work (Section \ref{sec:Related-Work}).

\section{A Simple Example}

\label{sec:A-Simple-Example}

Consider the following simple example. Assume that video content of
$1$ second has already been downloaded and buffered at a client.
The client is now trying to decide which video segment to fetch next.
It has been given the visibility of the video segments of the current
step and one step ahead -- it knows the quality and bitrate information
of their pre-encoded levels. It also has the information of the current
available bandwidth. If assuming that the bandwidth does not change
in the near future, the client can precisely calculate the evolution
of the buffer at the end of each step given that a specific segment
is fetched.

In this example, at the current step, the client is given two choices
-- if downloading the low-quality segment, the buffer gain is $0.5$
second and the resulting segment quality is $1$; if downloading the
high-quality segment, the buffer loss is $0.5$ second and the quality
is $2$. Similarly, at the next step, downloading the low-quality
segment would result in buffer gain of $0.4$ second and segment quality
of $2$, and downloading the high-quality segment would result in
buffer loss of $0.7$ second and segment quality of $4$. Figure \ref{Flo:example}
illustrates all the possible selections and the resulting position
of the client buffer at the end of each step. 

Assuming that at the end of the second step, all that matters is that
the buffer stays above $0$ seconds to avoid video playout stall.
The choice $\{high,high\}$ should not be considered because it results
in negative buffer of $-0.2$ second, implying that the video playout
will stall. Out of the rest possible choices, if the objective is
to maximize the \emph{minimum} quality out of the two segments, the
client should select $\{high,low\}$, yielding best minimum quality
of $2$. $ $As a result, the client should select the $high$-quality
segment to fetch for the current step. On the other hand, if the objective
is to maximize the \emph{total} quality of the two segments, the client
should select $\{low,high\}$, yielding best total quality of 5, and
for the current step, the client should select the $low$-quality
segment to fetch. The same procedure repeats at the next step with
the new bandwidth and video segment information.

\begin{figure}
\begin{centering}
\hspace{-0in} 
\par\end{centering}

\begin{centering}
\begin{minipage}[t]{1\columnwidth}%
\begin{center}
\vspace{-0.22in}

\par\end{center}

\begin{center}
\includegraphics[scale=0.55]{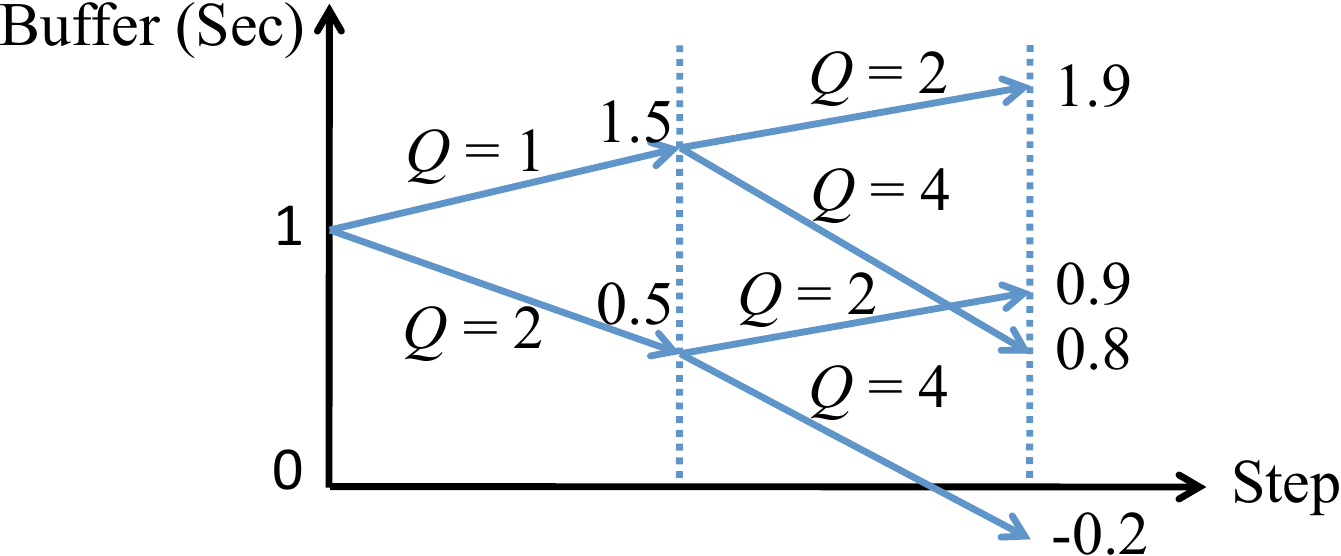} 
\par\end{center}%
\end{minipage}
\par\end{centering}

\centering{}\vspace{-0.05in}
 \caption{A simple example. A client selects current segment to be fetched,
based on information of available bandwidth, buffer size and knowledge
of pre-encoded video within a finite horizon of two segments.}

\label{Flo:example} \vspace{-0.1in}
 
\end{figure}

\section{Problem Model}

\label{sec:Problem-Model}

A video stream is chopped into segments $n=1,2,3,...$ of $\tau$
seconds. Each segment has been pre-encoded at $L$ levels. The resulting
quality and bitrate of the $n$-th segment of the $l$-th level are
denoted by $Q(n,l)$ and $R(n,l)$, respectively. In conventional
CBR encoding, it is assumed that $R(n,l)=R(m,l)$ for $n\neq m$.
Our problem model assumes the most general case where $R(n,l)$ at
level $l$ can vary from segment to segment.

At each adaptation step $n$, the client selects bitrate $R(n)$ for
the current segment to be fetched, from a finite set of available
bitrates $\{R(n,l):l=1,2,...,L\}$. The quality corresponding to the
selection $R(n)$ is denoted by $Q(n)$.

\subsection{Objectives}

To define the quality optimization objective, we introduce the notion
of $\alpha$-fairness \cite{kelly98,srikant04}. For a quality value
$q$, its $\alpha$-fairness utility is defined as 
\[
U_{\alpha}(q):=\frac{q^{1-\alpha}}{1-\alpha}.
\]
Our optimization objective is to maximize the sum of the $\alpha$-fairness
utility over a set of segments: 
\begin{equation}
\max\sum_{n}U_{\alpha}(Q(n)).\label{eq:obj1}
\end{equation}
The rationale is to model resource allocation among a set of segments
as a fairness problem. This objective function encompasses a number
of special cases. At one extreme, setting $\alpha=0$ corresponds
to utilitarianism that accounts for total quality (i.e., ``max-sum'').
At the other extreme, setting $\alpha\rightarrow\infty$ corresponds
to max-min fairness and minimum quality variability. Setting $\alpha$
between 0 and $\infty$ achieves balance between total quality and
quality variability. For example, $\alpha=1$ results in the limiting
form $U_{1}(q)=\log q$, and this corresponds to proportional fairness
and the Nash bargaining solution \cite{kelly98}.

An extension to (\ref{eq:obj1}) is to more precisely consider the
quality variation by incorporating a multiplier $\delta(n-1,n)$ at
each step $n$, and use the following objective function: 
\begin{equation}
\max\sum_{n}\delta(n-1,n)\cdot U_{\alpha}(Q(n)).\label{eq:obj2}
\end{equation}
The multiplier $\delta(n-1,n)$ discounts the overall utility if the
quality has shifted from one level to another. For example, set $\delta(n-1,n)=1$
if the segments at step $n-1$ and $n$ are selected from the same
level, and set $\delta(n-1,n)=0.9$ if they are from different levels.
Note that, (\ref{eq:obj2}) would be useful if the video source is
encoded such that each level corresponds to a constant quality, i.e.,
$Q(n,l)=Q(m,l)$ for $n\neq m$.

In the following discussions, we assume that the objective function
follows the general form $\max\sum_{n}U(n)$ where the utility function
$U(n)$ can be either $U_{\alpha}(Q(n))$ or $\delta(n-1,n)\cdot U_{\alpha}(Q(n))$.

\subsection{Constraints}

Let $B(0)$ be the initial buffer size (measured in content seconds),
and $B(n)$ the buffer size at the end\emph{ }of step $n$. After
video playout starts, the buffer evolution can be modeled as 
\begin{equation}
B(n)=B(n-1)+\tau-\tau\cdot R(n)/W(n)\label{eq:bn2}
\end{equation}
where $W(n)$ is the link bandwidth at step $n$, and $\tau\cdot R(n)/W(n)$
is the segment download duration. That is, in each step, the replenishment
of the buffer is $\tau$ seconds, and the depletion of the buffer
is $\tau\cdot R(n)/W(n)$ seconds.

The optimization must be subjected to the constraint of client buffer
size. Define $B_{L}$ and $B_{H}$ to be the lower and upper buffer
bound, respectively, with $0\leq B_{L}\leq B_{H}$. Except for the
initial state where $B(n)<B_{L}$, or for when there is sudden bandwidth
variation, the buffer should be maintained such that $B_{L}\leq B(n)\leq B_{H}$.
Furthermore, we define a buffer reference level $B_{0}$, towards
which the buffer level attempts to converge to. 

We note that setting the lower bound $B_{L}$ achieves the balance
between the video variability and the bandwidth variability that can
be compensated -- the higher the $B_{L}$, the more bandwidth variability
that can be accommodated, but the less breathing room for video quality
variability; vice versa.

Another constraint is the finite horizon $H(n)$, $H(n)\geq1$, in
number of adaptation steps (or segments). That is, at step $n$, only
the bitrate and quality information of the segments from now to $H(n)-1$
steps ahead, i.e., $\{(R(m,l),Q(m,l)):m=n,n+1,...,n+H(n)-1,l=1,2,...,L\}$,
are available to the client's optimization algorithm.

\section{Dynamic Programming}

\label{sec:Dynamic-Programming-Algorithm}

In this section, we describe the proposed dynamic programming solution,
which is used as a building block in each step of the online algorithm.
Within a finite horizon of $H$ steps, given an initial buffer size
$B_{init}\in[B_{L},B_{H}]$ and a final buffer size $B_{final}\in[B_{L},B_{H}]$,
the dynamic programming algorithm attempts to solve the following
problem: 
\[
\begin{array}{rcl}
 & \max_{\{R(n)\}} & \sum_{n=1}^{H}U(n),\\
 & \textrm{s.t.} & \sum_{n=1}^{H}R(n)\leq H\cdot W,\\
 &  & B(0)=B_{init},\\
 &  & B(H)=B_{final},\\
 &  & B_{L}\leq B(n)\leq B_{H}\textrm{ for }n=1,...,H,
\end{array}
\]
where $W$ is the available bandwidth, assumed to be constant. The
specific value of $B_{final}$ used will be set in the online algorithm
and passed as an input to the dynamic programming algorithm. 

How can we solve this problem within polynomial time? Our dynamic
programming solution is based on the following intuition. Recall that
$R(n)$ is the segment bitrate selected at step $n$. $R(n)$ has
a corresponding utility $U(n)$, which can be evaluated based on (\ref{eq:obj1})
or (\ref{eq:obj2}). Alternatively, we can write $U(n)$ in terms
of the buffer evolution 
\begin{equation}
U((n-1,B(n-1))\rightarrow(n,B(n)))=U(n),\label{eq:utility1}
\end{equation}
where the buffer evolves according to (\ref{eq:bn2}). The interpretation
of (\ref{eq:utility1}) is that, by moving the buffer from position
$B(n-1)$ at step $n-1$ to $B(n)$ at step $n$, the utility is $U(n)$.
Similar utility $U((m,B(m))\rightarrow(n,B(n)))$ can be defined for
the case of $n-m\geq2$. However, notice that in this case, there
may be multiple possible paths to move the buffer from $(m,B(m))$
to $(n,B(n))$, which may result in different utility values. We can
then define $U^{*}((m,B(m))\rightarrow(n,B(n)))$ to be the maximum
utility over all the possible paths.

The key to the dynamic programming solution is to realize that the
solution to the problem of $H$ steps can be formed using solutions
to the problems of lesser steps. First, we show that the following
theorem is true: \newtheorem{theorem}{Theorem} \begin{theorem}
Let $s$ be any intermediate step between two non-adjacent steps $m$
and $n$, or $m<s<n$. It holds true that 
\begin{eqnarray*}
 &  & U^{*}((m,B(m))\rightarrow(n,B(n)))\\
 & = & \max_{B(s)\in[B_{L},B_{H}]}U^{*}((m,B(m))\rightarrow(s,B(s)))\textrm{ }\\
 &  & \textrm{ }\textrm{ }\textrm{ }\textrm{ }\textrm{ }\textrm{ }\textrm{ }\textrm{ }\textrm{ }\textrm{ }\textrm{ }\textrm{ }\textrm{ }\textrm{ }\textrm{ }\textrm{ }\textrm{ }\textrm{ }\textrm{ }\textrm{ }\textrm{ }\textrm{ }\textrm{ }+\textrm{ }U^{*}((s,B(s))\rightarrow(n,B(n))).
\end{eqnarray*}
\end{theorem}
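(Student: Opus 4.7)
The plan is to invoke the standard Bellman principle of optimality. The result rests on two ingredients: the additive structure of the objective across steps, and the fact that once the intermediate state $(s, B(s))$ is pinned down, the two halves of the trajectory decouple.

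First, I would unpack what a ``path'' from $(m, B(m))$ to $(n, B(n))$ means in this model: it is a sequence of bitrate selections $R(m{+}1), \ldots, R(n)$ (each drawn from the available levels) that, via the buffer recursion (\ref{eq:bn2}), produces a feasible trajectory $B(m{+}1), \ldots, B(n)$ satisfying $B_L \le B(k) \le B_H$ and terminating at $B(n)$. The path's utility is $\sum_{k=m+1}^{n} U(k)$, and $U^*$ is the maximum of this sum over all feasible paths with the given endpoints.

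Second, I would observe that any such path must pass through some specific value $B(s)$ at the intermediate step $s$. Grouping the steps $m{+}1,\ldots,s$ as a first half and $s{+}1,\ldots,n$ as a second half, the additivity of the utility gives a clean decomposition: the total utility of the path equals the utility of the $(m, B(m)) \rightarrow (s, B(s))$ portion plus the utility of the $(s, B(s)) \rightarrow (n, B(n))$ portion. Moreover, once $B(s)$ is fixed, the two halves share no decision variables, and the feasibility constraint $B_L \le B(k) \le B_H$ is a local constraint on each half. Hence maximizing over all paths constrained to hit $(s, B(s))$ factors as the sum of the two sub-maxima, namely $U^*((m, B(m)) \rightarrow (s, B(s))) + U^*((s, B(s)) \rightarrow (n, B(n)))$. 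Taking the outer maximum over $B(s)\in[B_L,B_H]$ then yields the claim. The direction ``$\ge$'' in the outer max is immediate (any splitting produces a concrete concatenated path), and ``$\le$'' follows because the optimal global path realizes \emph{some} intermediate value $B(s)$ and is thus dominated by the corresponding term in the outer max.

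The only delicate point I anticipate is the status of the variable $B(s)$: since bitrates come from a discrete set, only finitely many values of $B(s)$ are actually reachable from $(m, B(m))$ and co-reachable to $(n, B(n))$. I would handle this by the standard convention $U^*(\cdot \rightarrow \cdot) = -\infty$ whenever the endpoint pair is infeasible, so that the written ``$\max_{B(s)\in[B_L,B_H]}$'' is effectively a maximum over the finite set of reachable split points and the identity remains valid. No deeper obstacle arises; the argument is the textbook additivity-plus-separability pattern that underpins every dynamic programming recursion, and the discrete bitrate set makes the sub-maxima well-defined.
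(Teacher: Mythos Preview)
Your proposal is correct and follows essentially the same approach as the paper: both arguments split an arbitrary path at the intermediate step $s$, use additivity of the utility to write it as a sum of two halves, bound each half by the corresponding $U^{*}$, and then bound by the outer maximum over $B(s)$; the reverse inequality is obtained by concatenating optimal sub-paths. Your write-up is in fact a bit more careful than the paper's, since you explicitly address both directions and the feasibility convention $U^{*}=-\infty$ for unreachable pairs, but the underlying idea is identical.
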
 \begin{proof} Let $B^{*}(s)\in[B_{L},B_{H}]$ be the
buffer size at step $s$ that maximizes $U^{*}((m,B(m))\rightarrow(s,B(s)))+\textrm{ }U^{*}((s,B(s))\rightarrow(n,B(n)))$.
For any path from $(m,B(m))$ to $(n,B(n))$, let $B^{\star}(s)\in[B_{L},B_{H}]$
be the buffer at $s$ that the path has passed. It holds true that
\begin{eqnarray}
 &  & U((m,B(m))\rightarrow(n,B(n)))\nonumber \\
 & = & U((m,B(m))\rightarrow(s,B^{\star}(s)))\nonumber \\
 &  & \textrm{ }\textrm{ }\textrm{ }\textrm{ }\textrm{ }\textrm{ }\textrm{ }\textrm{ }\textrm{ }\textrm{ }+U((s,B^{\star}(s))\rightarrow(n,B(n)))\nonumber \\
 & \leq & U^{*}((m,B(m))\rightarrow(s,B^{\star}(s)))\nonumber \\
 &  & \textrm{ }\textrm{ }\textrm{ }\textrm{ }\textrm{ }\textrm{ }\textrm{ }\textrm{ }\textrm{ }\textrm{ }+U^{*}((s,B^{\star}(s))\rightarrow(n,B(n)))\label{eq:proof1}\\
 & \leq & U^{*}((m,B(m))\rightarrow(s,B^{*}(s)))\nonumber \\
 &  & \textrm{ }\textrm{ }\textrm{ }\textrm{ }\textrm{ }\textrm{ }\textrm{ }\textrm{ }\textrm{ }\textrm{ }+U^{*}((s,B^{*}(s))\rightarrow(n,B(n)))\label{eq:proof2}
\end{eqnarray}
where (\ref{eq:proof1}) is by the definition of $U^{*}$ and (\ref{eq:proof2})
is by the definition of $B^{*}(s)$. The optimal value is achievable
by selecting $B^{\star}(s)=B^{*}(s)$ and recursively selecting the
optimal sub-paths. \end{proof} This theorem states that, going from
$(m,B(m))$ to $(n,B(n))$, inevitably one has to pass a mid-way step
$s$. At step $s$, one could have many possible buffer sizes $B(s)$.
It holds true that, the optimal utility value of a problem $(m,B(m))\rightarrow(n,B(n))$
has to be the sum of the optimal utility values of the sub-problems
$(m,B(m))\rightarrow(s,B(s))$ and $(s,B(s))\rightarrow(n,B(n))$
over all possible $B(s)$. So one can solve the problem by solving
its sub-problems, by solving its sub-sub-problems, and so on. Eventually,
things reduce to the baseline case of $(m-1,B(m-1))\rightarrow(m,B(m))$.
Once a sub-problem has been solved, one can store the solution (including
the optimal utility values and some side information for backtracking
purpose) in a table for later reuse to save repeated work.

\begin{algorithm}
\small \textbf{Input}:\vspace{-0.05in}
\begin{itemize}
\item $B_{init}$, $B_{final}$, $B_{L}$, $B_{H}$, $\tau$, $W$, $H$\vspace{-0.05in}
\item $\{(R(m,l),Q(m,l)):m=1,...,H,l=1,2,...,L\}$\vspace{-0.05in}
\end{itemize}
\textbf{Output}:\vspace{-0.05in}
\begin{itemize}
\item $\{R(m):m=1,...,H\}$ \vspace{-0.05in}
\end{itemize}
\textbf{Procedure}:\vspace{-0.05in}
\begin{itemize}
\item Let bin $k$ corresponds to $B_{init}\in I_{k}$. Store in the table
$U^{*}(0,k)=0$ and $B^{*}(0,k)=B_{init}$. $ $\vspace{-0.05in}
\item For step $m=1,2,...,H$: \vspace{-0.05in}

\begin{itemize}
\item For bin $k=1,...,K$, if $U^{*}(m-1,k)$ already has value stored:\vspace{-0.05in}

\begin{itemize}
\item For level $l=1,...,L$: \vspace{-0.05in}

\begin{itemize}
\item Calculate $U=U^{*}(m-1,k)+U((m-1,B^{*}(m-1,k))\rightarrow(m,B(m)))$,
where the second term $U((m-1,B^{*}(m-1,k))\rightarrow(m,B(m)))$
corresponds to fetching $R(m,l)$. Record $B(m)$.\vspace{-0.05in}
\item Let bin $k'$ corresponds to $B(m)\in I_{k'}$. If $U^{*}(m,k')$
has no value stored yet or the currently stored value $U^{*}(m,k')<U$,
set $U^{*}(m,k')=U$ and store the corresponding side information
$B^{*}(m,k')=B(m)$.\vspace{-0.05in}
\end{itemize}
\end{itemize}
\end{itemize}
\item Backtrack to get the optimal bitrates $\{R^{*}(m):m=1,...,H\}$ that
yields $U^{*}(H,k)$, where bin $k$ corresponds to $B_{final}\in I_{k}$. \vspace{-0.05in}
\item Output $\{R^{*}(m):m=1,...,H\}$.\vspace{-0.05in}
\end{itemize}
\caption{Dynamic Programming \label{alg:DP}}
\end{algorithm}

An implementation detail is that, as $[B_{L},B_{H}]$ is a continuous
interval but the dynamic programming is discrete, we need to quantize
$[B_{L},B_{H}]$ into $K$ discrete bins with step size $\Delta B$,
and only store one optimal utility value for each bin. Denote by $\mathcal{I}:=\{I_{1},I_{2},...,I_{K}\}$
the resulting bins. Thus, the optimal utility values $U^{*}(n,k)$
can be stored in an $(H+1)\times K$ two-dimensional table where the
first dimension corresponds to the number of steps (including the
initial zeroth step) and the second corresponds to the bins. In the
table, we also store the side information \textbf{$B^{*}(n,k)$},
which is the ending buffer size corresponding to $U^{*}(n,k)$.

The dynamic programming solution is described in Algorithm \ref{alg:DP}.
A simple analysis shows that the algorithm has complexity $O(H\cdot K\cdot L)$.
In practical implementation in C, we find that with typical parameters
(e.g., $H=30$, $K=50$, $L=10$), the execution time is within a
few milli-seconds (e.g., 5 ms).

Note that one corner case is, in the second last step of Algorithm
\ref{alg:DP}, we may not find a bin $k$ such that $B_{final}\in I_{k}$
and there is a value $U^{*}(H,k)$ in it. This may happen if the available
bandwidth is either too large or too small for the available pre-encoded
video bitrates (recall that we assume no off-intervals between segment
downloading in the dynamic programming problem formulation). If this
happens, we may find another bin $k''$ which has a value $U^{*}(H,k'')$
stored and is closest to $k$, and then perform the backtrace starting
from $k''$. In this case, we also output the buffer offset 
\begin{equation}
B_{offset}=B^{*}(H,k'')-B^{*}(H,k)\label{eq:buffer_offset}
\end{equation}
to be later used in Section \ref{sec:Incorporation-into-Client}.
\vspace{-0.05in}

\section{Online Algorithm}

\label{sec:Online-Algorithm}

If the available bandwidth does not vary and the video quality information
is available all at once, the dynamic programming algorithm is sufficient
to solve the optimization problem in one shot. In reality, the bandwidth
changes over time and the video quality information is available within
a finite horizon. To deal with this, we propose an online algorithm
that repeatedly applies the dynamic programming in a sliding-window
manner.

We define a reference buffer level $B_{0}$ that the buffer aims to
converge to. We also have a buffer lower bound $B_{L}$ and a buffer
upper bound $B_{H}$. However, different from the problem formulation
in Section \ref{sec:Dynamic-Programming-Algorithm}, we do not guarantee
that the buffer at transient state is bounded within $[B_{L},B_{H}]$.
For example, when the streaming starts, the initial buffer is zero.
Instead, at a particular step $n$, we only make sure that the buffer
is bounded within $[\min(B_{L},B(n-1)),\max(B_{H},B(n-1))]$. But
over time, thanks to the convergence to $B_{0}$, the buffer is set
to be bounded within $[B_{L},B_{H}]$ if the bandwidth does not abruptly
change.

The finite horizon size at step $n$ is denoted by $H(n)$, which
may vary over time in some applications. For example, in live streaming,
as the end-to-end latency is bounded, the longer the buffered video
is at the client, the shorter the horizon is. Within a window of size
$H(n)$, the dynamic programming algorithm is applied, and an optimal
rate allocation $\{R^{*}(m):m=1,...,H(n)\}$ is obtained. Then the
bitrate of the current segment to be fetched is set to be $R^{*}(1)$,
i.e., only the most immediate rate is applied.

The online algorithm is described in Algorithm \ref{alg:online}.

\begin{algorithm}
\small \textbf{Global input}:\vspace{-0.05in}
\begin{itemize}
\item $B_{L}$, $B_{H}$, $B_{0}$, $\tau$\vspace{-0.05in}
\end{itemize}
\textbf{Input at step $n$}:\vspace{-0.05in}
\begin{itemize}
\item $W(n)$, $B(n-1)$, $H(n)$ \vspace{-0.05in}
\item $\{(R(m,l),Q(m,l)):m=n,...,n+H(n)-1,l=1,...,L\}$ \vspace{-0.05in}
\end{itemize}
\textbf{Output at step }$n$:\vspace{-0.05in}
\begin{itemize}
\item $R(n)$\vspace{-0.05in}
\end{itemize}
\textbf{Procedure at step $n$}:\vspace{-0.05in}
\begin{itemize}
\item $B_{init}=B(n-1)$.\vspace{-0.05in}
\item $B_{final}=B_{0}$.\vspace{-0.05in}
\item $B_{L}(n)=\min(B_{L},B(n-1))$.\vspace{-0.05in}
\item $B_{H}(n)=\max(B_{H},B(n-1))$.\vspace{-0.05in}
\item $\{R^{*}(m):m=1,...,H(n)\}$
\vspace{-0.07in}
\begin{eqnarray*}
 & = & \mathrm{\textrm{DynamicProgramming}}(B_{init},B_{final},B_{L}(n),\\\vspace{-0.05in}
 &  & B_{H}(n),\tau,W(n),H(n),\{(R(m,l),Q(m,l)):\\\vspace{-0.05in}
 &  & m=n,...,n+H(n)-1,l=1,...,L\}).\vspace{-0.05in}
\end{eqnarray*}\vspace{-0.2in}
\item Output $R(n)=R^{*}(1)$. \vspace{-0.05in}
\end{itemize}
\caption{Online Algorithm\label{alg:online}}
\end{algorithm}

\section{PANDA with Consistent Quality}

\label{sec:Incorporation-into-Client}

PANDA (reading: Probe-AND-Adapt) is an HAS client rate adaptation
algorithm we recently designed to yield high stability and fast responsiveness
to bandwidth variations when multiple HAS clients are running within
a network domain sharing bottleneck links \cite{panda13}. Performance
evaluations show that, compared to conventional HAS algorithms, PANDA
is able to reduce the instability of video bitrate selection by over
75\% without increasing the risk of buffer underrun. To detect the
available bandwidth, PANDA \emph{probes} the network by additively
incrementing its sending rate at each adaptation step and multiplicatively
decreasing its rate if congestion is detected, and \emph{adapts} its
video bitrate accordingly. This ``probe and adapt'' principle is
akin to the additive increase / multiplicative decrease (AIMD) principle
used in TCP, but it operates in the application layer and at a much
longer time scale.

The original PANDA design is video quality-agnostic. In this section,
we extend it to incorporate video quality optimization. Our online
algorithm naturally fits into PANDA, with the probing part of PANDA
providing the bandwidth estimation for the online algorithm, while
the online algorithm determining the next segment to fetch and the
target inter-request time. We name the new algorithm \emph{PANDA with
Consistent Quality (PANDA/CQ)}.

\begin{algorithm}
\small \textbf{Global input}:\vspace{-0.05in}
\begin{itemize}
\item $\kappa$, $w$, $a$, $B_{L}$, $B_{H}$, $B_{0}$, $\tau$, $\beta$\vspace{-0.05in}
\end{itemize}
\textbf{Input at step $n$}:\vspace{-0.05in}
\begin{itemize}
\item $B(n-1)$, $H(n)$, $T(n-1)$, $\tilde{x}(n-1)$\vspace{-0.05in}
\item $\{(R(m,l),Q(m,l)):m=n,...,n+H(n)-1,l=1,...,L\}$ \vspace{-0.05in}
\end{itemize}
\textbf{Output at step }$n$:\vspace{-0.05in}
\begin{itemize}
\item $R(n)$, $\hat{T}(n)$\vspace{-0.05in}
\end{itemize}
\textbf{Procedure at step $n$}:\vspace{-0.05in}
\begin{itemize}
\item Estimate the bandwidth share $\hat{x}(n)$ by solving:\vspace{-0.05in}
\[
\frac{\hat{x}(n)-\hat{x}(n-1)}{T(n-1)}=\kappa\cdot(w-\max(0,\hat{x}(n-1)-\tilde{x}(n-1)+w)).\vspace{-0.05in}
\]

\item Smooth out $\hat{x}(n)$ to produce its filtered version $\hat{y}(n)$\vspace{-0.05in}
by solving:
\[
\frac{\hat{y}(n)-\hat{y}(n-1)}{T(n-1)}=-a\cdot(\hat{y}(n-1)-\hat{x}(n)).\vspace{-0.05in}
\]

\item Apply the online algorithm to pick the fetched video bitrate $R(n)$:\vspace{-0.05in}
\begin{eqnarray*}
R(n) & = & \textrm{Online}(B_{L},B_{H},B_{0},\tau,\hat{y}(n),B(n-1),H(n),\\
 &  & \{(R(m,l),Q(m,l)):m=n,...,n+H(n)-1,\\
 &  & l=1,...,L\}).
\end{eqnarray*}
\vspace{-0.2in}
\item Determine the target time until the next request $\hat{T}(n)$ by:\vspace{-0.05in}
\begin{equation}
\hat{T}(n)=\frac{R(n)\cdot\tau}{\hat{y}(n)}+\beta\cdot(B(n-1)-B_{0}).\vspace{-0.05in}\label{eq:qopanda_step4}
\end{equation}

\end{itemize}
\caption{PANDA/CQ\label{alg:qo-panda}}
\end{algorithm}

\begin{center}
\begin{table}[t]
{\scriptsize }%
\begin{minipage}[t]{0.99\columnwidth}%
\begin{center}
\begin{tabular}{|l|c|c|}
\hline 
Algorithm  & Parameter  & Default Value\tabularnewline
\hline 
PANDA and PANDA/CQ  & $\kappa$  & 0.28\tabularnewline
 & $w$  & 0.3\tabularnewline
 & $a$  & 0.2\tabularnewline
 & $\beta$  & 0.2\tabularnewline
 & $\tau$  & 2\tabularnewline
\hline 
PANDA only  & $B_{0}$  & 20\tabularnewline
 & $\epsilon$  & 0\tabularnewline
\hline 
PANDA/CQ only  & $B_{0}$  & 30\tabularnewline
 & $B_{L}$  & 10\tabularnewline
 & $B_{H}$  & 50\tabularnewline
 & $H$  & 30\tabularnewline
\hline 
\end{tabular}
\par\end{center}%
\end{minipage}{\scriptsize \caption{{\scriptsize \label{tab:parameters} }Default client parameters in
ns-2 simulations}
}{\scriptsize \par}

{\scriptsize \vspace{-0.1in}
 }
\end{table}

\par\end{center}

\begin{figure}
\begin{centering}
\includegraphics[scale=0.6]{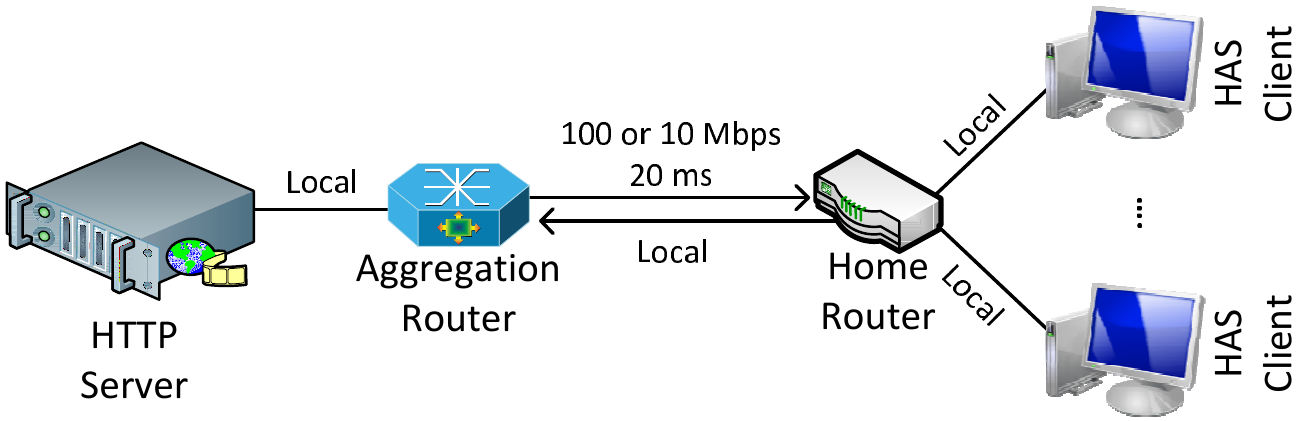} \vspace{-0.05in}
 
\par\end{centering}

\centering{}\caption{The network topology configured in the ns-2 simulator. Local indicates
that the bitrate is effectively unbounded and the link delay is 0
ms.}

\label{Flo:testbed_network} \vspace{-0.1in}
 
\end{figure}

The PANDA/CQ is described in Algorithm \ref{alg:qo-panda}. For each
adaptation step $n$, it proceeds in four sub-steps. In sub-step one,
it estimates the available bandwidth using probing, with $k$ and
$w$ the two probing parameters controlling the probing convergence
rate and the multiplicative decrease threshold, respectively. $T(n)$
is the time duration of step $n$, equal to $\max(\hat{T}(n),\tilde{T}(n))$,
where $\tilde{T}(n)$ is the duration for downloading segment $n$.
$\tilde{x}(n)$ is the calculated TCP throughput, based on formula
$\tilde{x}(n)=R(n)\cdot\tau/\tilde{T}(n)$. The resulting rate $\hat{x}(n)$
is the raw estimation of the bandwidth. In sub-step two, it smoothes
out the raw estimation via exponentially weighted moving average (EWMA)
filtering, to produce the filtered version of the bandwidth estimation,
$\hat{y}(n)$. Here $a$ is a parameter controlling the filtering
convergence rate. In sub-step three, $\hat{y}(n)$ is taken as the
input available bandwidth (i.e., $W(n)$) of the online algorithm,
which generates a video bitrate $R(n)$, to be fetched in the current
step. In the last sub-step, it calculates the target inter-fetch interval
$\hat{T}(n)$ based on $R(n)$ and $\hat{y}(n)$. $\hat{T}(n)$ also
compensates for the current buffer offset $B(n-1)-B_{0}$, with parameter
$\beta>0$ controling the convergence speed.

Note that the current form of Algorithm \ref{alg:qo-panda} does not
handle the corner case of available bandwidth being too large for
the available pre-encoded video bitrates. Recall that in Section \ref{sec:Dynamic-Programming-Algorithm},
we discuss that there may be an offset between the target final buffer
size $B_{final}$ and the actual ending buffer size the algorithm
produces, assuming no off-intervals between segments. If this offset
is positive, we can introduce off-intervals between segment downloads
to compensate for the offset (if negative, there is nothing we can
do). Thus, to handle this corner case, we can simply replace (\ref{eq:qopanda_step4})
in the last sub-step with 
\[
\hat{T}(n)=\frac{R(n)\cdot\tau}{\hat{y}(n)}+\beta\cdot(B(n-1)-B_{0})+\frac{\max(B_{offset}(n),0)}{H(n)}
\]
where $B_{offset}(n)$ is calculated according to (\ref{eq:buffer_offset}).

\begin{figure*}
\begin{centering}
\begin{minipage}[t]{0.8\columnwidth}%
\begin{center}
\includegraphics[scale=0.35]{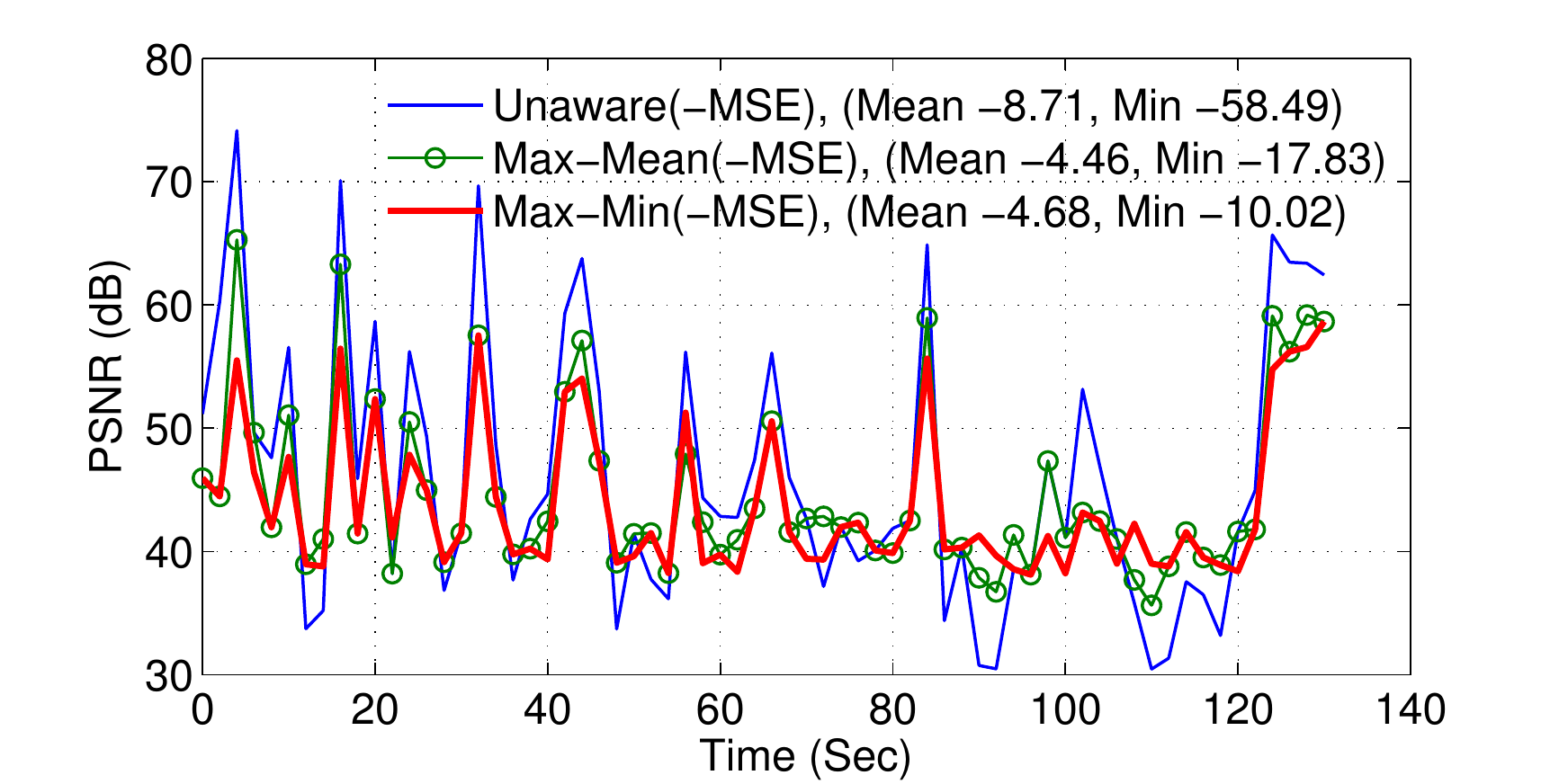} 
\par\end{center}

\begin{center}
\vspace{-0.05in}
 {\footnotesize (a1) Quality (}\emph{\footnotesize Elysium)}{\footnotesize{} }
\par\end{center}{\footnotesize \par}

\begin{center}
{\footnotesize \vspace{-0.25in}
}
\par\end{center}{\footnotesize \par}

\begin{center}
{\footnotesize \includegraphics[scale=0.35]{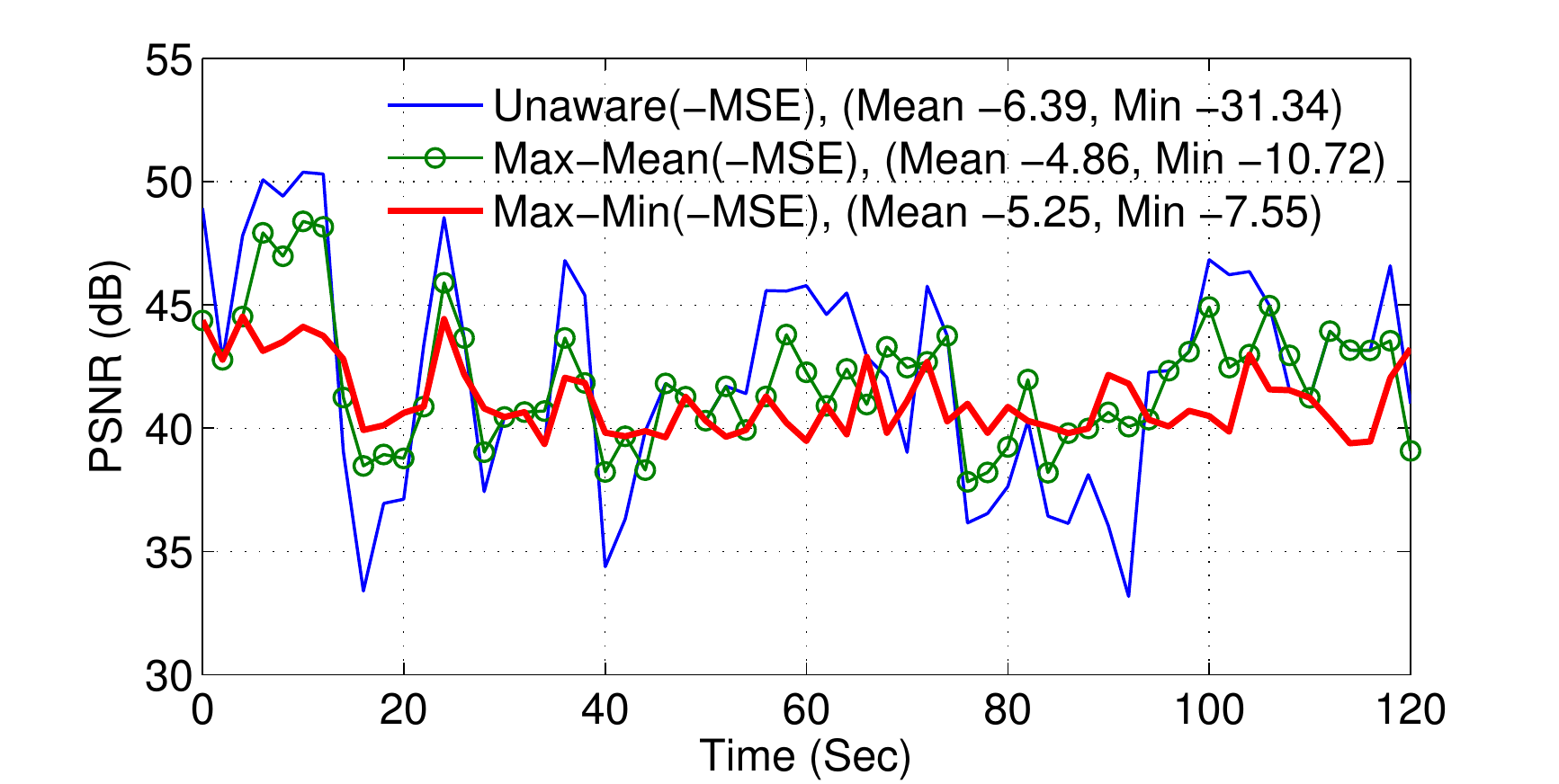} }
\par\end{center}{\footnotesize \par}

\begin{center}
{\footnotesize \vspace{-0.05in}
 (b1) Quality (}\emph{\footnotesize Avatar}{\footnotesize ) }
\par\end{center}%
\end{minipage}%
\begin{minipage}[t]{0.8\columnwidth}%
\begin{center}
\includegraphics[scale=0.35]{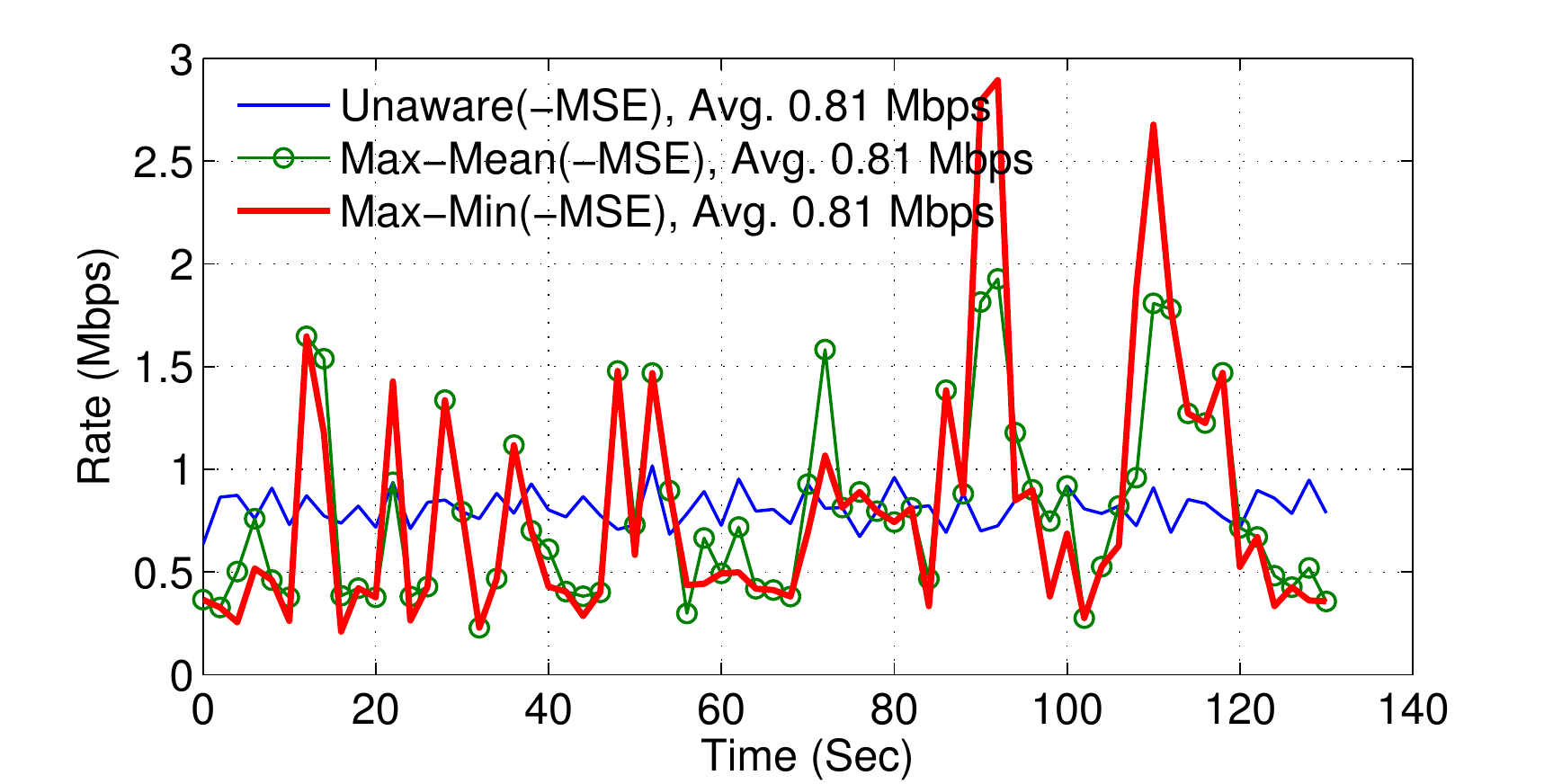} 
\par\end{center}

\begin{center}
\vspace{-0.05in}
 {\footnotesize (a2) Rate (}\emph{\footnotesize Elysium}{\footnotesize ) }
\par\end{center}{\footnotesize \par}

\begin{center}
{\footnotesize \vspace{-0.25in}
}
\par\end{center}{\footnotesize \par}

\begin{center}
{\footnotesize \includegraphics[scale=0.35]{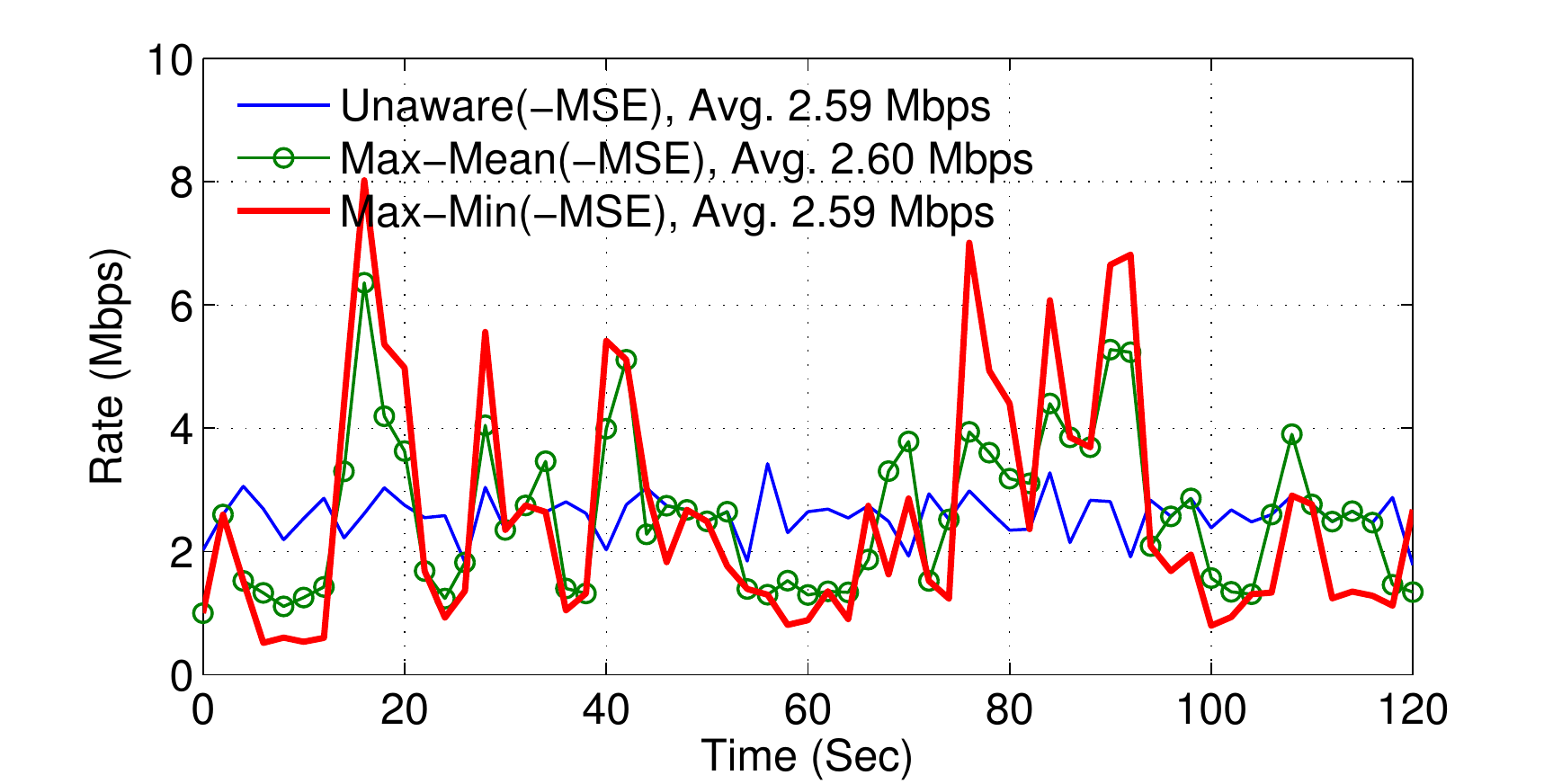} }
\par\end{center}{\footnotesize \par}

\begin{center}
{\footnotesize \vspace{-0.05in}
 (b2) Rate (}\emph{\footnotesize Avatar}{\footnotesize ) }
\par\end{center}%
\end{minipage}
\par\end{centering}

\vspace{0.05in}

\caption{Comparing the traces of three schemes: 1) bitrate-based fetching that
is unaware of the quality information (Unaware(-MSE)), 2) dynamic
programming solution that maximizes the minimal quality (Max-Min(-MSE))
and 3) dynamic programming solution that maximizes the mean quality
(Max-Mean(-MSE)). The available bandwidth is set at constant 0.81
Mbps for \emph{Elysium} and 2.60 Mbps for \emph{Avatar}. The initial
and final buffer levels are 30 seconds; the buffer lower and upper
bounds are 20 and 50 seconds, respectively. The reported quality in
-MSE is converted to PSNR using (\ref{eq:psnr}) for better display.}

\label{Flo:exp_optobj} \vspace{-0.1in}
 
\end{figure*}

\begin{figure*}
\begin{centering}
\begin{minipage}[t]{0.8\columnwidth}%
\begin{center}
\includegraphics[scale=0.35]{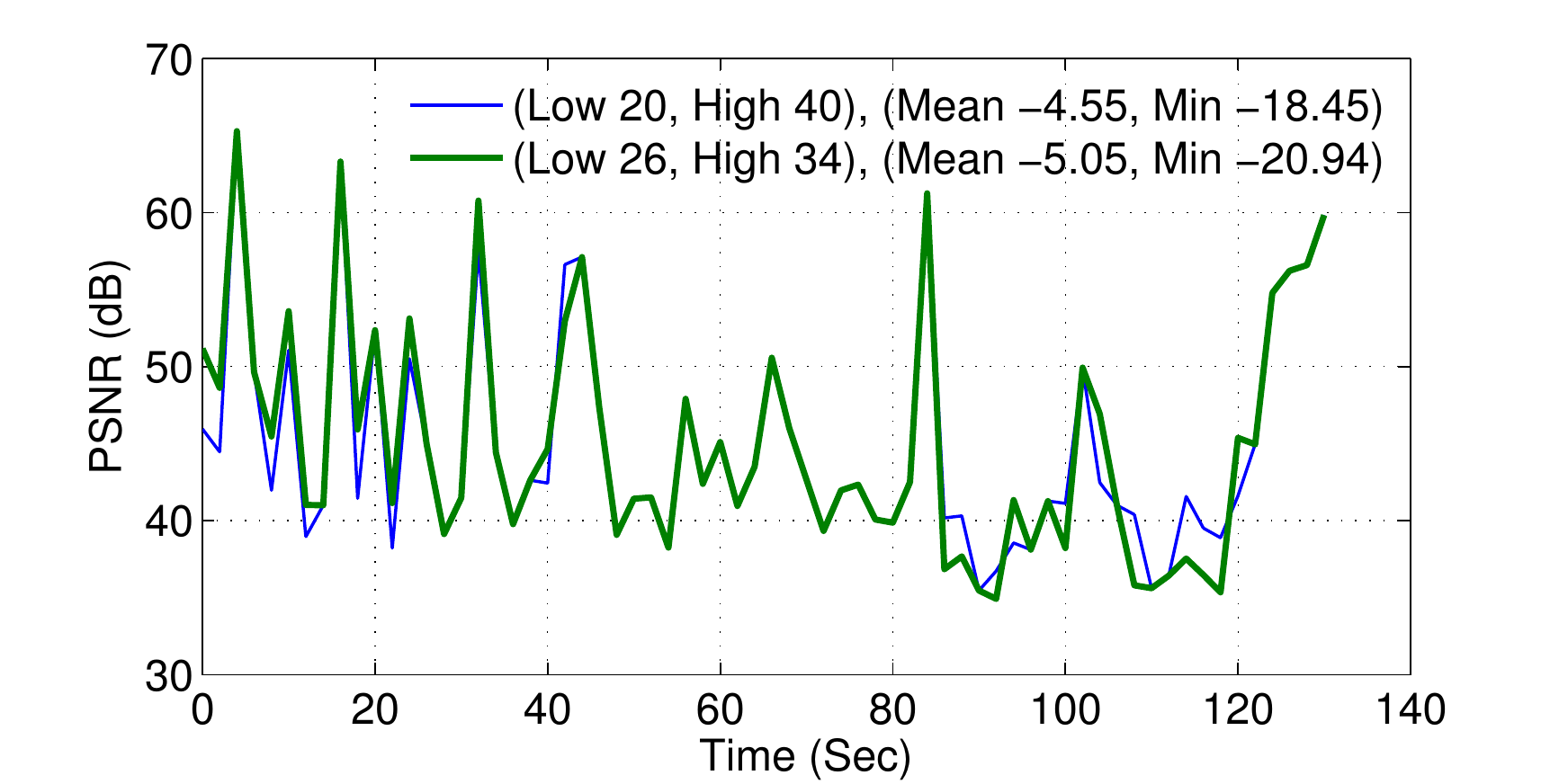} 
\par\end{center}

\begin{center}
\vspace{-0.05in}
 {\footnotesize (a1) Quality (}\emph{\footnotesize Elysium)}{\footnotesize{} }
\par\end{center}{\footnotesize \par}

\begin{center}
{\footnotesize \vspace{-0.25in}
}
\par\end{center}{\footnotesize \par}

\begin{center}
{\footnotesize \includegraphics[scale=0.35]{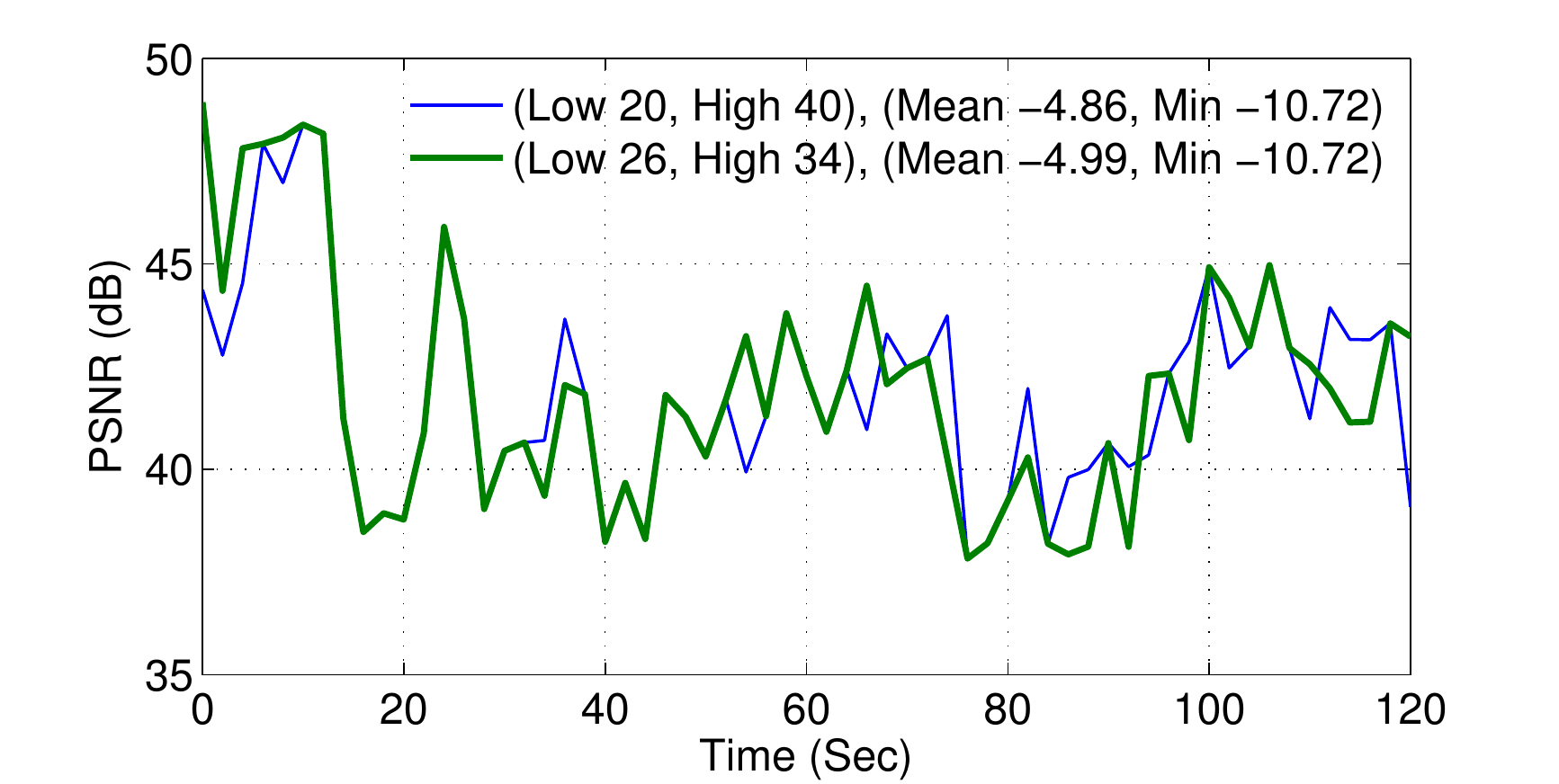} }
\par\end{center}{\footnotesize \par}

\begin{center}
{\footnotesize \vspace{-0.05in}
 (b1) Quality (}\emph{\footnotesize Avatar}{\footnotesize ) }
\par\end{center}%
\end{minipage}%
\begin{minipage}[t]{0.8\columnwidth}%
\begin{center}
\includegraphics[scale=0.35]{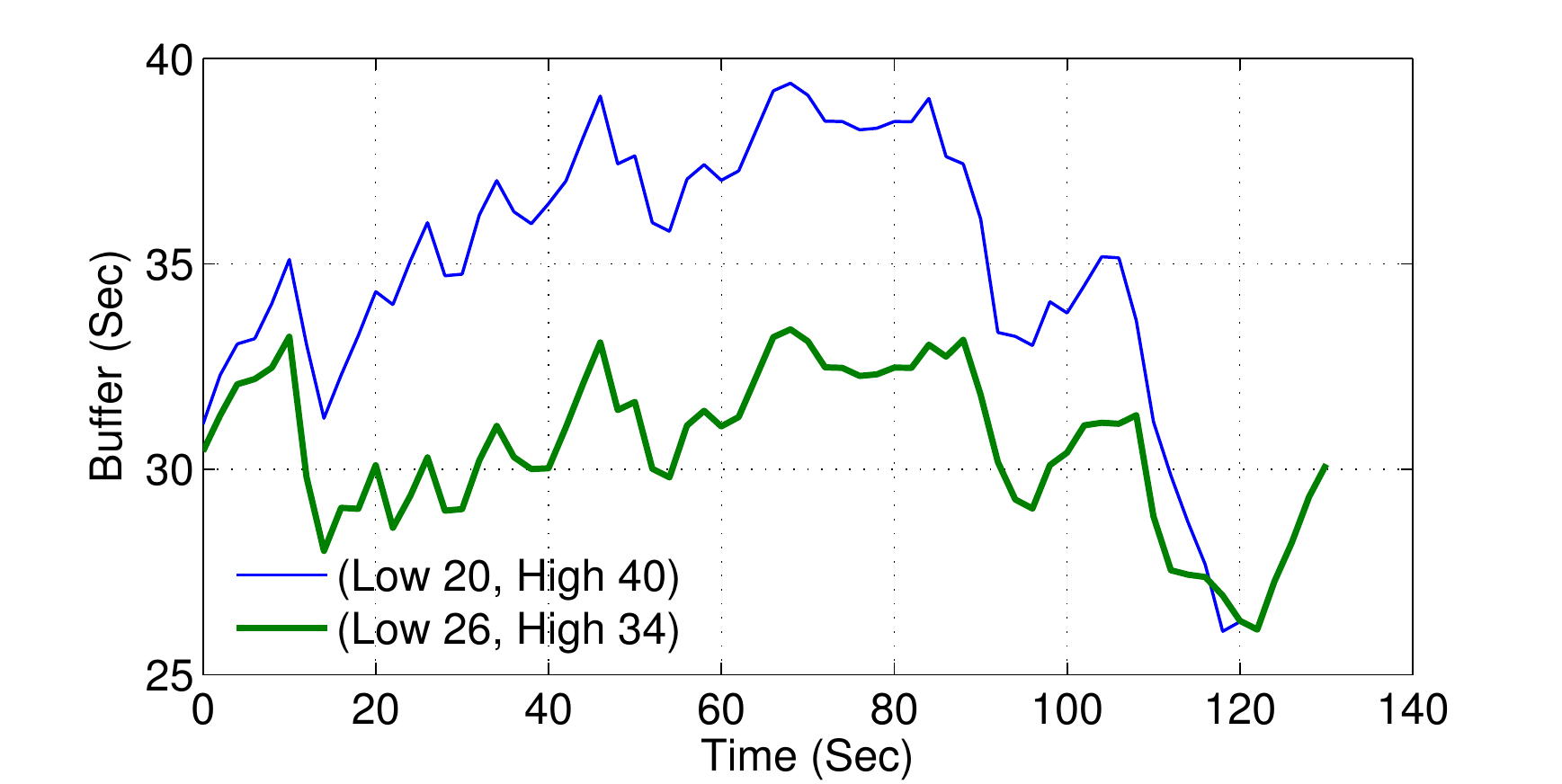} 
\par\end{center}

\begin{center}
\vspace{-0.05in}
 {\footnotesize (a2) Buffer (}\emph{\footnotesize Elysium}{\footnotesize ) }
\par\end{center}{\footnotesize \par}

\begin{center}
{\footnotesize \vspace{-0.25in}
}
\par\end{center}{\footnotesize \par}

\begin{center}
{\footnotesize \includegraphics[scale=0.35]{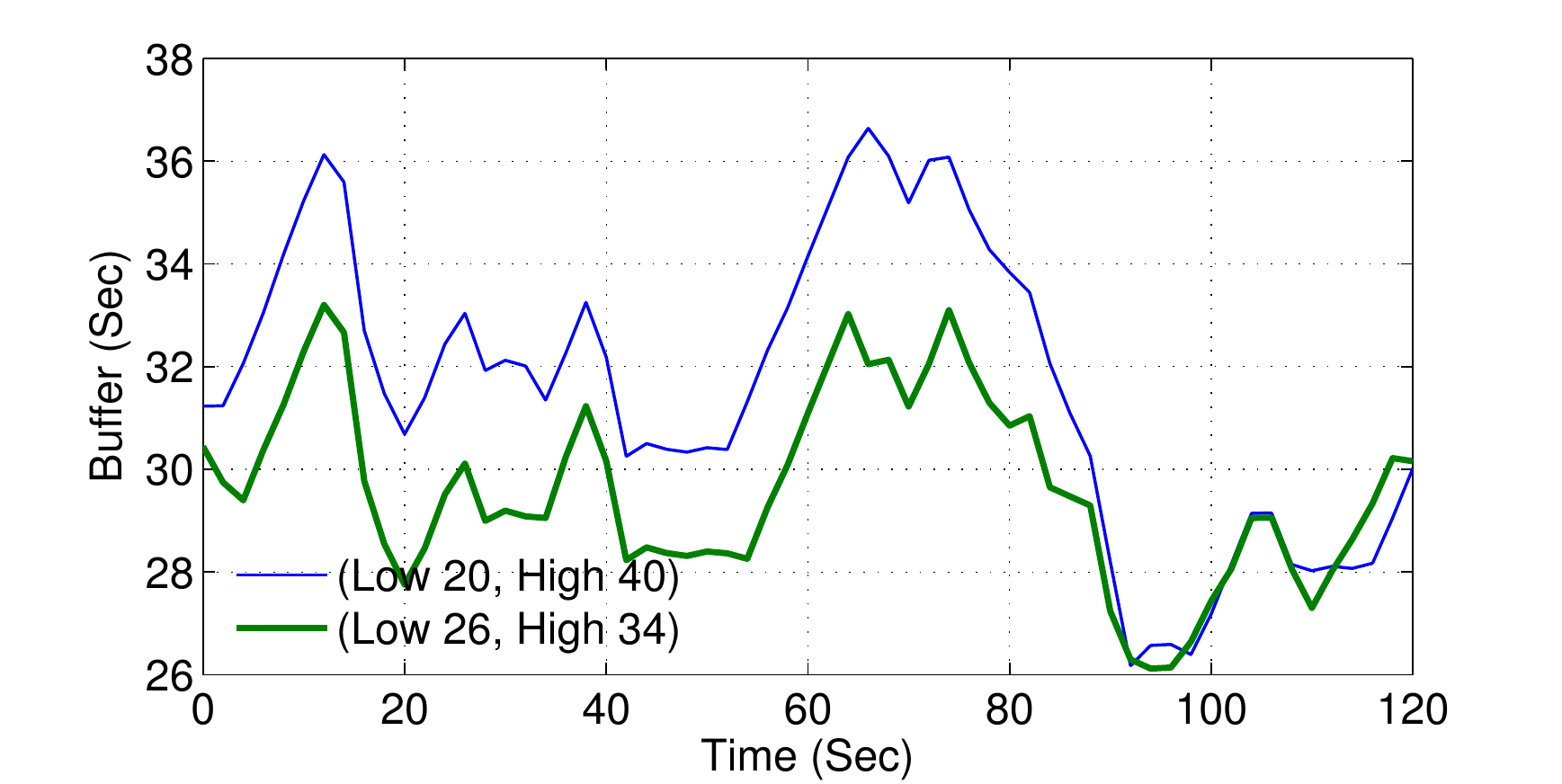} }
\par\end{center}{\footnotesize \par}

\begin{center}
{\footnotesize \vspace{-0.05in}
 (b2) Buffer (}\emph{\footnotesize Avatar}{\footnotesize ) }
\par\end{center}%
\end{minipage}
\par\end{centering}

\vspace{0.05in}

\caption{Comparing the traces of dynamic programming solution that maximizes
the mean quality with buffer lower and upper bound of 1) 20 and 40
seconds, respectively, and 2) 34 and 40 seconds, respectively. The
available bandwidth is set at constant 0.81 Mbps for \emph{Elysium}
and 2.60 Mbps for \emph{Avatar}. The initial and final buffer levels
are 30 seconds. The reported quality in -MSE is converted to PSNR
using (\ref{eq:psnr}) for better display.}

\label{Flo:exp_buffer} \vspace{-0.1in}
 
\end{figure*}

\section{Performance Evaluation}

\label{sec:Performance-Evaluation}

In this section, we evaluate the performance of the proposed algorithms.
The goal is to understand the behavior of individual modules, as well
as their aggregate performance. In the first step, we evaluate the
dynamic programming solution and the online algorithm in MATLAB simulations.
In the second step, we integrate them into the PANDA algorithm and
evaluate the performance in the ns-2 simulator \cite{ns2}. Besides
the simulation results shown in this section, we also provide a few
sample videos online \cite{cqsamples} for readers' subjective evaluation.

We have identified several existing quality-based HAS schemes \cite{mehrotra09,Jarnikov:SPIC11,vinay2013,BT12,georgopoulos2013towards}
(refer to Section \ref{sec:Related-Work} for discussions). However,
they either focus on a different perspective (e.g., encoding, cross-stream
optimization), or are based on different assumptions (e.g., scalable
coded video source, statistically stationary source/channel models).
Thus, it is not possible to directly compare our scheme with them.
Instead, we compare our solution with the bitrate-based adaptation
scheme that is unaware of the video quality information.

\subsection{Simulation Setup}

We select two video sources for our evaluation. The first one is a
two-minute long 720p \emph{Elysium} trailer crawled from YouTube \cite{cqsamples}.
The second one is a twelve-minute long 1080p clip extracted from the
movie \emph{Avatar}. For MATLAB evaluation, we use a two-minute part
of the \emph{Avatar} clip. Each video is chopped into segments of
two seconds. The \emph{Elysium} clip is encoded into seven bitrate
levels 400, 600, 800, 1200, 1600, 2400 and 3200 Kbps, and the \emph{Avatar}
clip in 11 bitrate levels 400, 600, 800, 1200, 1600, 2400, 3200, 4400,
5600, 7000 and 9000 Kbps. 

At each level, we use CBR encoding. This is to illustrate that our
client algorithm does not require the video to be VBR-encoded, and
also for a fair comparison with the bitrate-based adaptation scheme.
Note that the proposed algorithm is orthogonal to how the videos are
encoded. In practice, we find that (capped) VBR-encoded video content
generally works better with our client algorithm than CBR.

To measure the video quality, we simply use the \emph{negation} of
mean-squared error (MSE) value for each segment. Note that while it
may not be the metric to produce the best visual quality, it is good
enough for numerically comparing different schemes. However, in the
plots, the MSE values are converted to PSNR for better display, using
formula \cite{psnr}:
\begin{equation}
PSNR=10\cdot\log_{10}\left(\frac{255^{2}}{MSE}\right).\label{eq:psnr}
\end{equation}

\begin{figure*}
\begin{centering}
\begin{minipage}[t]{0.8\columnwidth}%
\begin{center}
\includegraphics[scale=0.35]{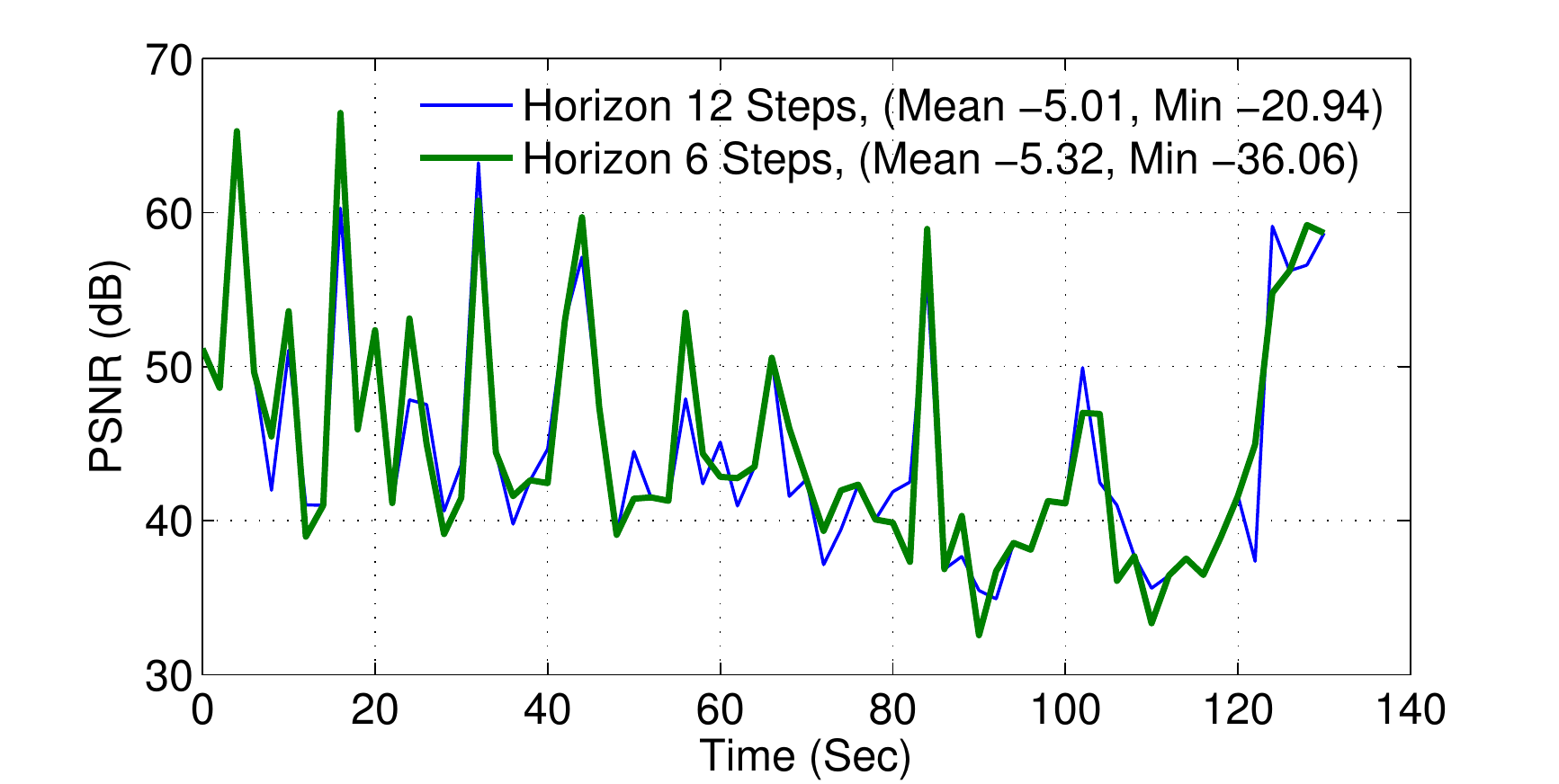} 
\par\end{center}

\begin{center}
\vspace{-0.05in}
 {\footnotesize (a1) Quality (}\emph{\footnotesize Elysium)}{\footnotesize{} }
\par\end{center}{\footnotesize \par}

\begin{center}
{\footnotesize \vspace{-0.25in}
}
\par\end{center}{\footnotesize \par}

\begin{center}
{\footnotesize \includegraphics[scale=0.35]{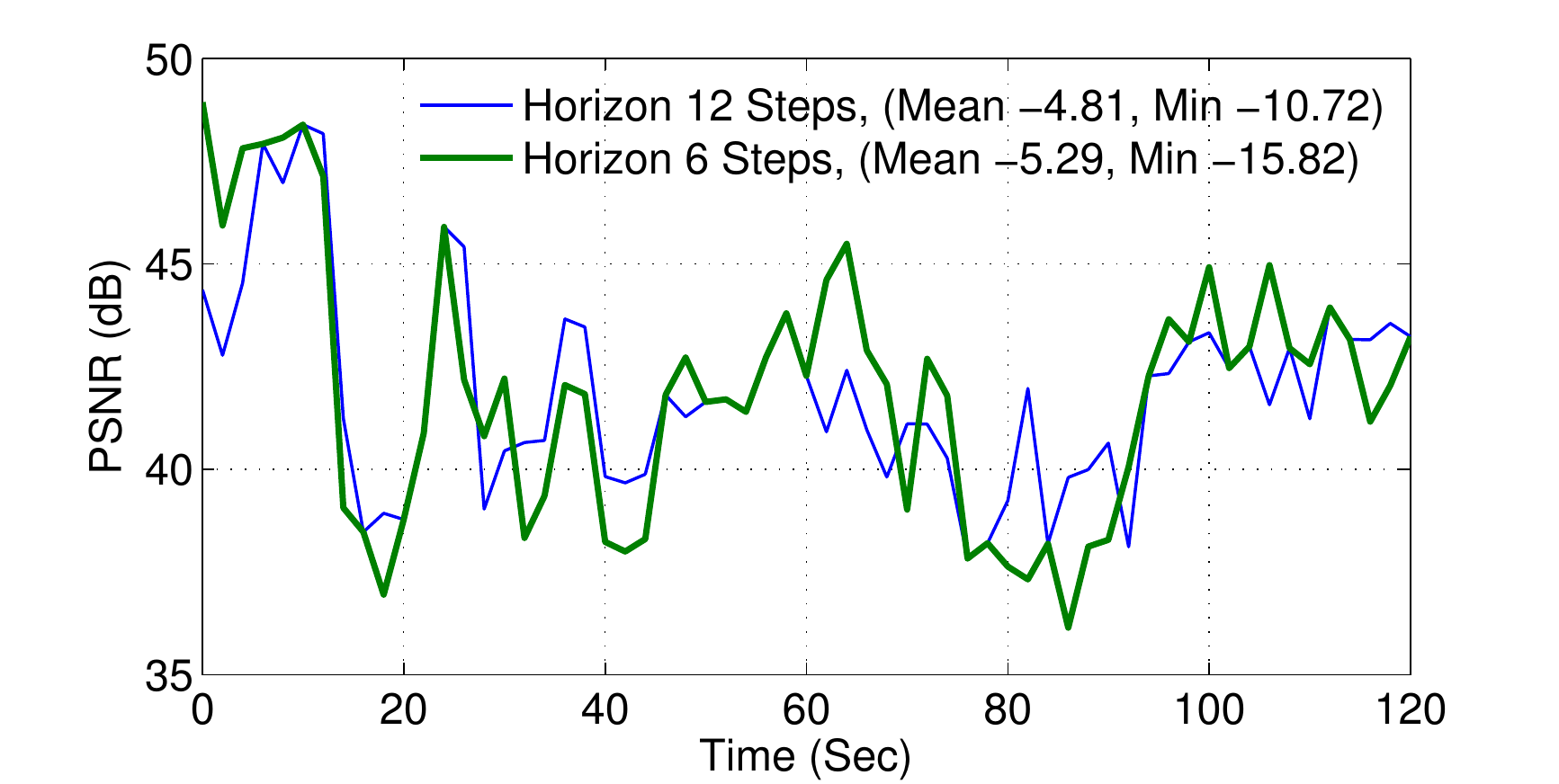} }
\par\end{center}{\footnotesize \par}

\begin{center}
{\footnotesize \vspace{-0.05in}
 (b1) Quality (}\emph{\footnotesize Avatar}{\footnotesize ) }
\par\end{center}%
\end{minipage}%
\begin{minipage}[t]{0.8\columnwidth}%
\begin{center}
\includegraphics[scale=0.35]{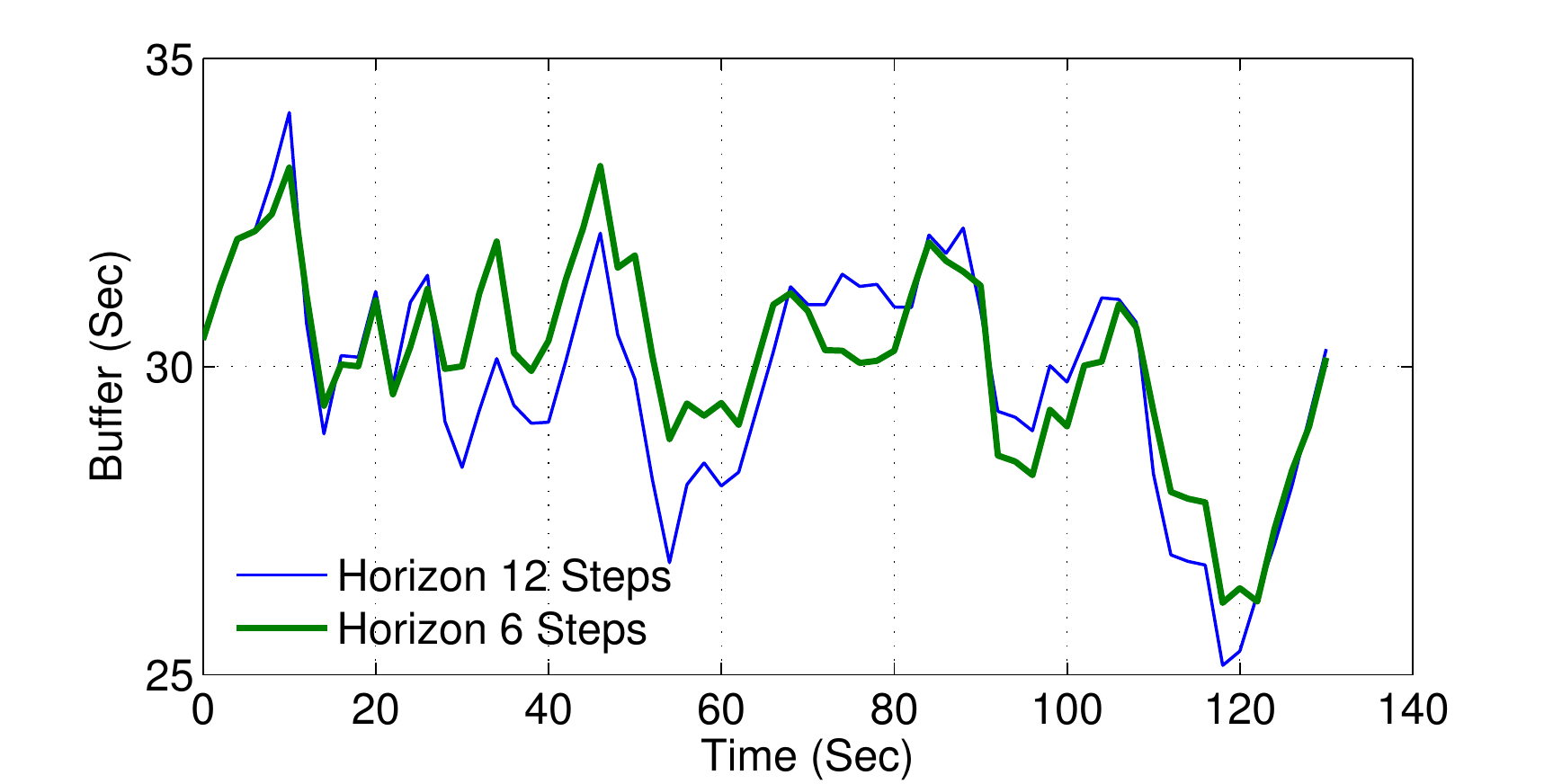} 
\par\end{center}

\begin{center}
\vspace{-0.05in}
 {\footnotesize (a2) Buffer (}\emph{\footnotesize Elysium}{\footnotesize ) }
\par\end{center}{\footnotesize \par}

\begin{center}
{\footnotesize \vspace{-0.25in}
}
\par\end{center}{\footnotesize \par}

\begin{center}
{\footnotesize \includegraphics[scale=0.35]{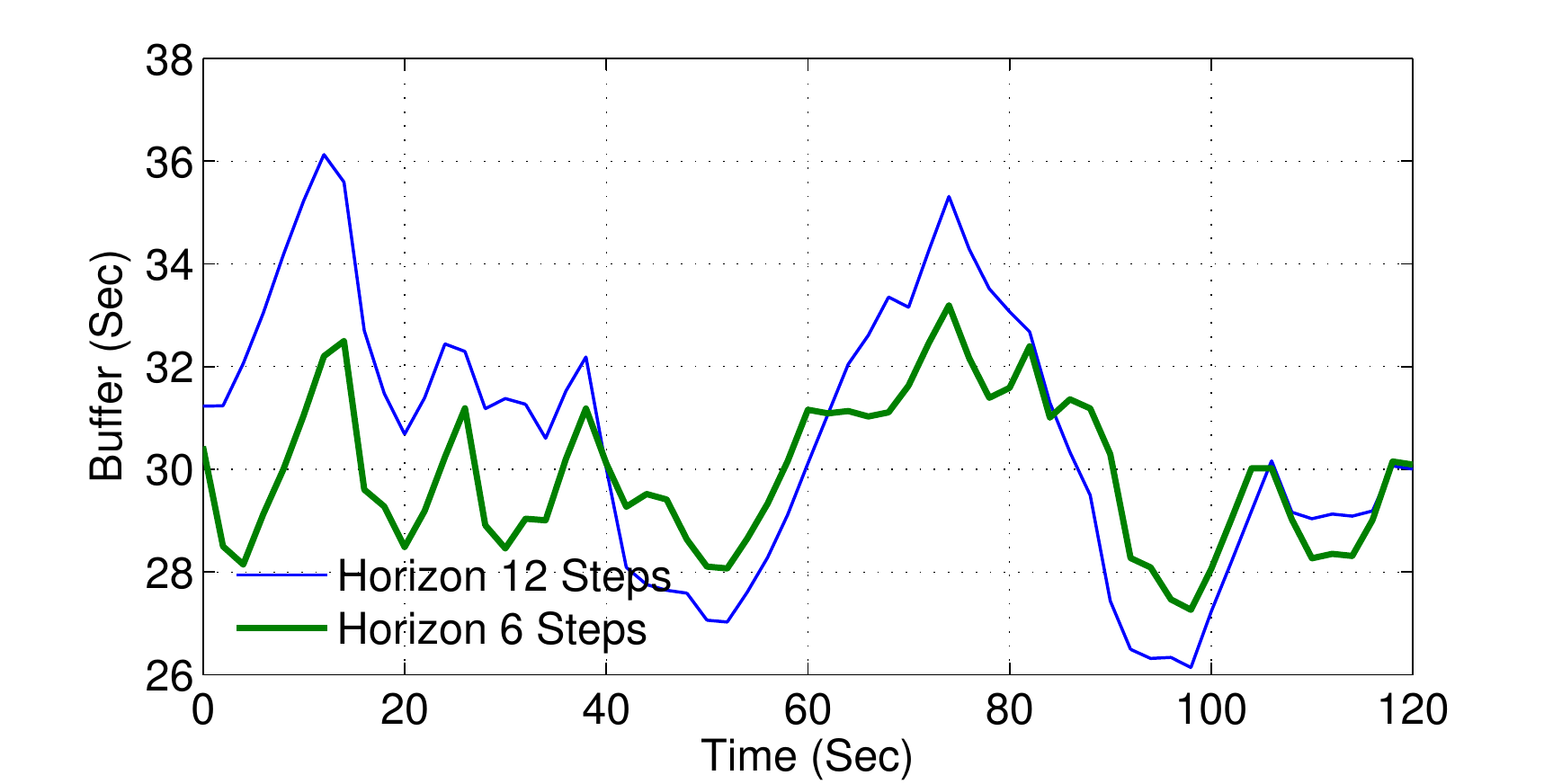} }
\par\end{center}{\footnotesize \par}

\begin{center}
{\footnotesize \vspace{-0.05in}
 (b2) Buffer (}\emph{\footnotesize Avatar}{\footnotesize ) }
\par\end{center}%
\end{minipage}
\par\end{centering}

\vspace{0.05in}

\caption{Comparing the traces of online algorithm that maximizes the mean quality
with finite horizon of size 1) 12 steps and 2) 6 steps. The available
bandwidth is set at constant 0.81 Mbps for \emph{Elysium} and 2.60
Mbps for \emph{Avatar}. The initial and final buffer levels are 30
seconds. Buffer lower and upper bounds are 20 and 40 seconds, respectively.
The reported quality in -MSE is converted to PSNR using (\ref{eq:psnr})
for better display.}

\label{Flo:exp_horizon2} \vspace{-0.1in}
 
\end{figure*}

In the MATLAB simulation, we input the bitrate and quality information
of the segments into the program. Assuming that we know the network
bandwidth and there are no gaps between segment downloads, we can
precisely calculate the evolution of the client buffer and perform
optimization accordingly.

In the ns-2 simulation, we evaluate the scenario of multiple clients
sharing a bottleneck link. The network is configured as in Figure
\ref{Flo:testbed_network}. The queueing policy used at the aggregation
router-home router bottleneck link is the following. For a link bandwidth
below or equal to 20 Mbps, we use random early detection (RED) with
$(min\_thr,max\_thr,p)=(30,90,0.25)$. The default parameters used
in the PANDA/CQ and the original PANDA algorithms are listed in Table
\ref{tab:parameters}.

\subsection{Dynamic Programming}

In the first experiment to evaluate the dynamic programming solution,
we compare three schemes: 1) bitrate-based fetching that is unaware
of the quality information, 2) dynamic programming that maximizes
the minimal quality and 3) dynamic programming that maximizes the
mean quality (equivalently, the total quality). We set the lower and
upper bounds of the buffer to be loose (20 and 50 seconds, respectively)
so that we can see the best gain achievable by quality optimization.

Figure \ref{Flo:exp_optobj} shows the traces of the quality (converted
to PSNR for better display) and the bitrate of fetched segments for
the three schemes. From the quality trace, we can observe that the
two quality-optimized schemes yield much better quality than the quality-unaware
scheme, both in terms of mean quality and minimal quality. The scheme
maximizing the minimal quality achieves best minimum quality (e.g.,
$-10.02$ for \emph{Elysium} compared to $-58.49$ of the quality-unaware
scheme), and the scheme maximizing the mean quality achieves best
mean quality (e.g., $-4.46$ for \emph{Elysium} compared to $-8.71$
of the quality-unaware scheme).

We are interested in how the buffer constraint would impact the quality
optimization. In the next experiment, we keep the objective to be
maximizing the mean quality, and vary the buffer bound in the dynamic
programming solution. We test two sets of lower an upper bounds: $(B_{L},B_{H})=(20,40)$
and $(B_{L},B_{H})=(26,34)$ seconds. The reference buffer level $B_{0}$
is set to 30 seconds. The resulting traces of quality, bitrate and
buffer evolution are shown in Figure \ref{Flo:exp_buffer}. From the
buffer evolution plot, we can verify that the resulting buffers are
strictly within the specified lower and upper bounds. From the quality
trace plot, we can see that the optimal mean quality decreases as
the upper bound becomes tighter, which well agrees with our intuition.
In Figure \ref{Flo:trend_buffer}, we show the trend of how the buffer
low and upper bound would affect the mean and minimum quality of the
two video sources. As the bound becomes loose, the quality improvement
will reach a saturation point beyond which further loosening the bound
would no longer improve the quality.

\begin{figure}
\begin{centering}
\begin{minipage}[t]{0.49\columnwidth}%
\begin{center}
\includegraphics[scale=0.36]{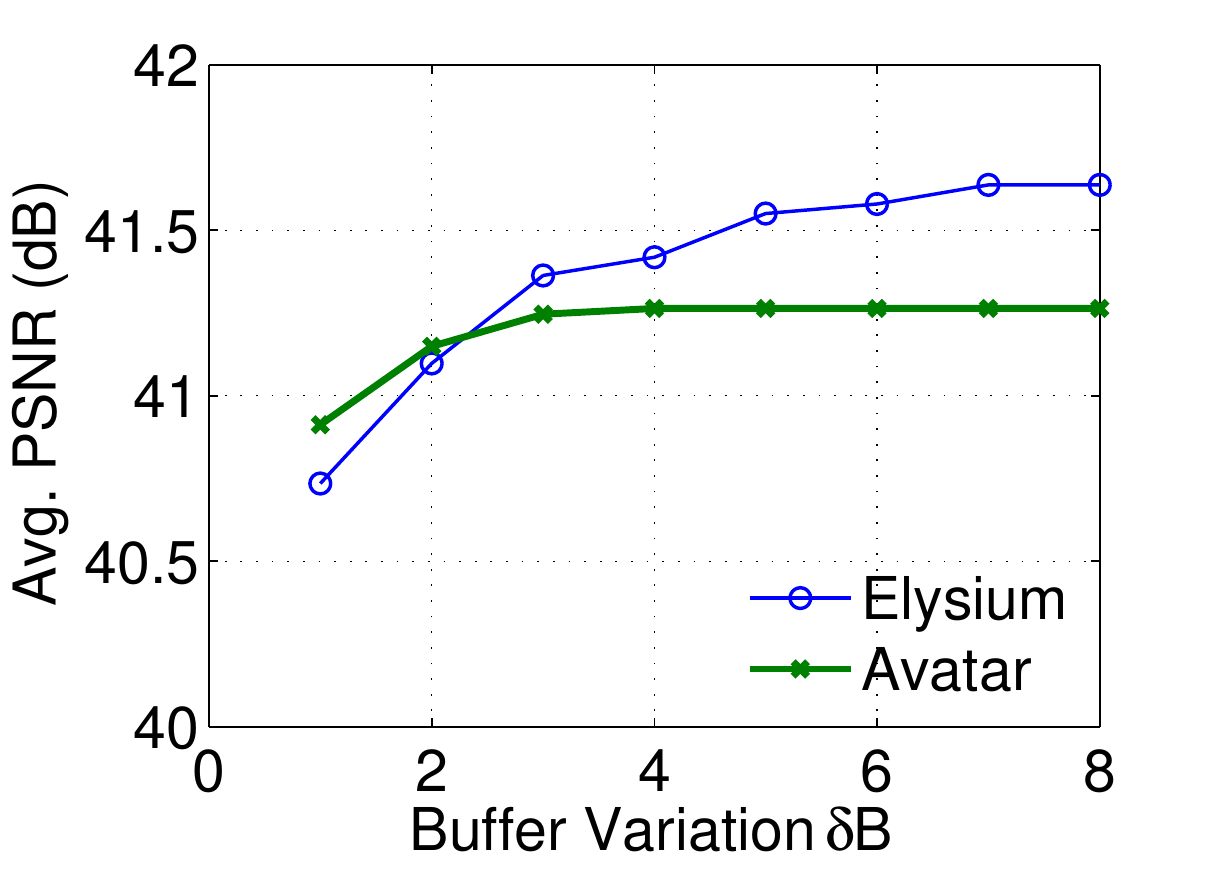} 
\par\end{center}

\begin{center}
\vspace{-0.05in}
 {\footnotesize (a) Mean Quality }
\par\end{center}{\footnotesize \par}

\begin{center}
{\footnotesize \vspace{-0in}
}
\par\end{center}%
\end{minipage}%
\begin{minipage}[t]{0.49\columnwidth}%
\begin{center}
\includegraphics[scale=0.36]{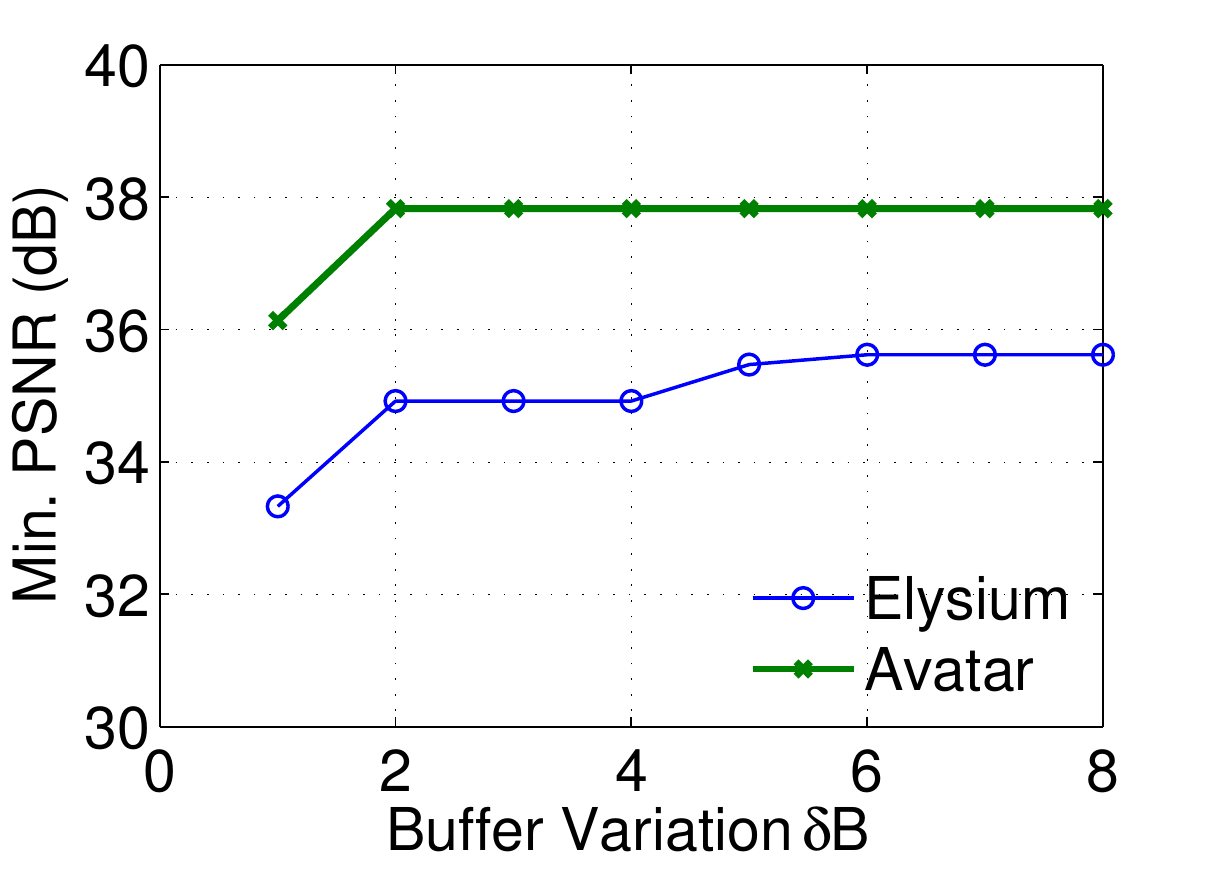} 
\par\end{center}

\begin{center}
\vspace{-0.05in}
 {\footnotesize (b) Minimum Quality }
\par\end{center}{\footnotesize \par}

\begin{center}
{\footnotesize \vspace{-0in}
}
\par\end{center}%
\end{minipage}
\par\end{centering}

\vspace{-0.15in}
 \caption{The mean and minimum quality as a function of the buffer upper and
lower bound $(B_{L},B_{H})=(30-2\cdot\delta B,30+2\cdot\delta B)$,
where $\delta B$ varies from 1 to 8. The initial and final buffer
levels are 30 seconds. }

\label{Flo:trend_buffer} \vspace{-0.1in}
 
\end{figure}

\subsection{Online Algorithm}

We proceed to evaluate the online algorithm, which uses the dynamic
programming solution as a building block. First, we would like to
evaluate how the size of the finite horizon would impact the quality
optimization result. we keep the objective to be maximizing the mean
quality, and compare the finite horizon size of 12 steps and 6 steps.
The resulting traces of quality and buffer evolution are shown in
Figure \ref{Flo:exp_horizon2}. From the quality trace plot, we can
see that the optimal mean quality decreases with a shorter horizon,
which well agrees with our intuition that myopic decision yields equal
or worse performance. From the buffer evolution plot, it is observed
that having a shorter horizon will limit the buffer's variability.
This is understood, as being myopic will limit the client to take
advantage of the buffer's breathing room. To see the general trend
of how the horizon size influences the mean and minimum quality of
the two video sources, refer to Figure \ref{Flo:trend_horizon}. Note
that the non-monotonic behavior may be due to the buffer quantization
effect, as discussed in Section \ref{sec:Dynamic-Programming-Algorithm}.
Similar to the buffer constraint, there is a similar saturation effect
in the horizon constraint, i.e., beyond certain point further improving
the horizon would no longer improve the video quality.

\begin{figure}
\begin{centering}
\begin{minipage}[t]{0.49\columnwidth}%
\begin{center}
\includegraphics[scale=0.36]{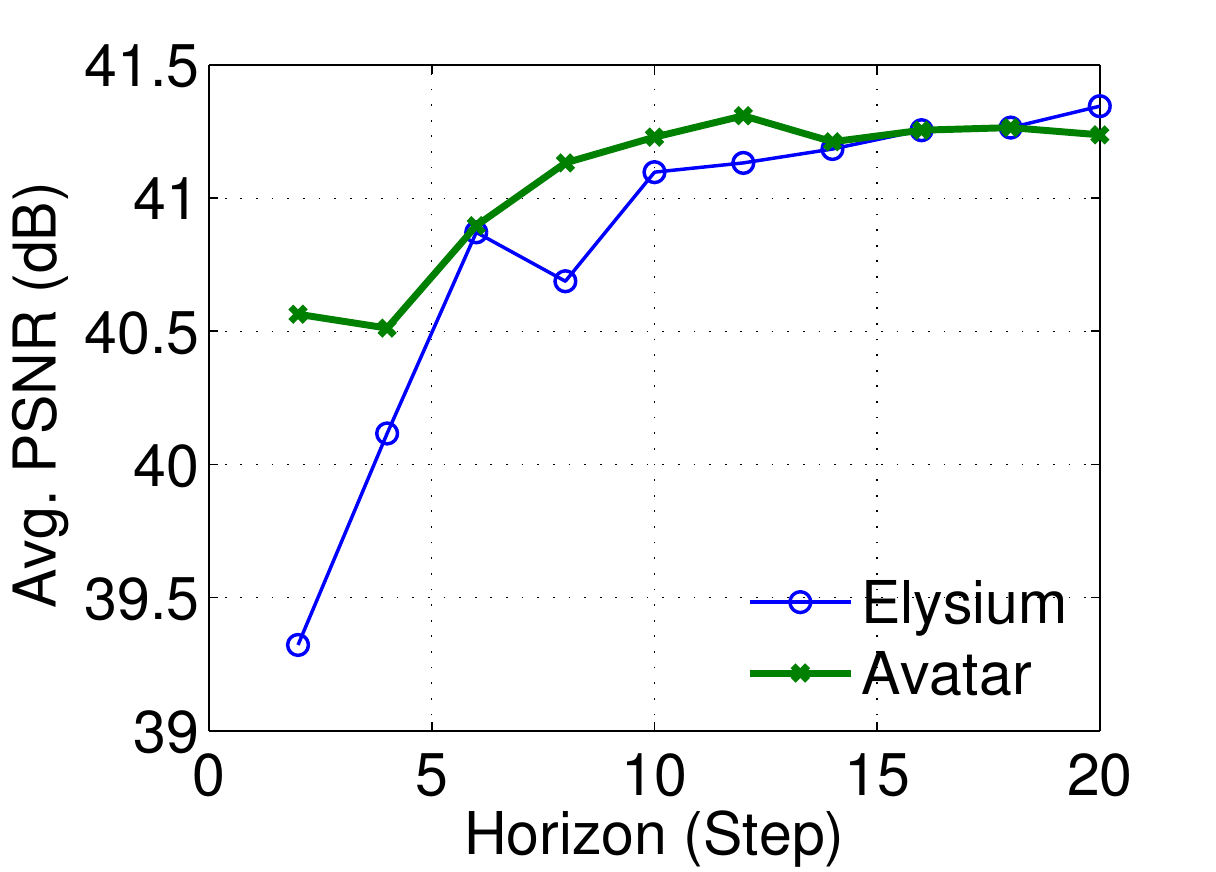} 
\par\end{center}

\begin{center}
\vspace{-0.05in}
 {\footnotesize (a) Mean Quality }
\par\end{center}{\footnotesize \par}

\begin{center}
{\footnotesize \vspace{-0in}
}
\par\end{center}%
\end{minipage}%
\begin{minipage}[t]{0.49\columnwidth}%
\begin{center}
\includegraphics[scale=0.36]{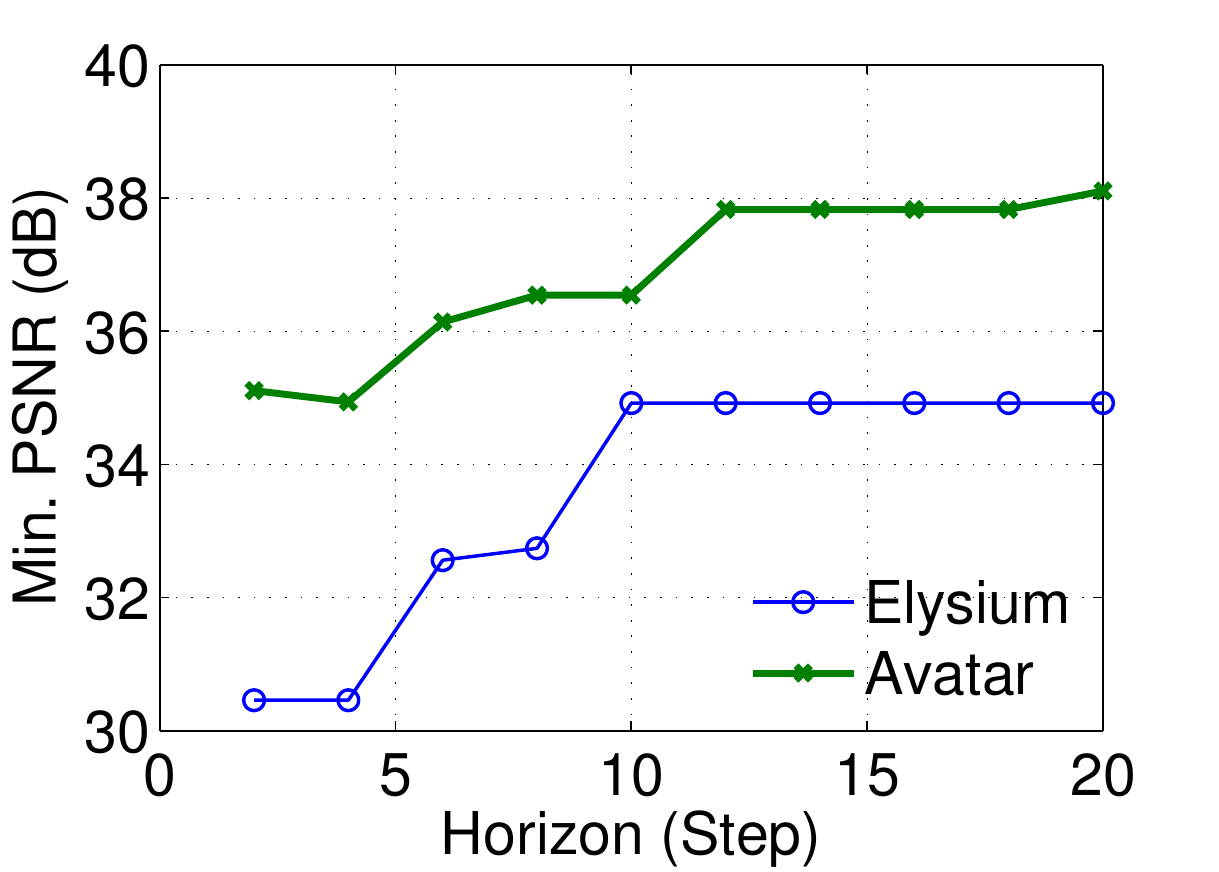} 
\par\end{center}

\begin{center}
\vspace{-0.05in}
 {\footnotesize (b) Minimum Quality }
\par\end{center}{\footnotesize \par}

\begin{center}
{\footnotesize \vspace{-0in}
}
\par\end{center}%
\end{minipage}
\par\end{centering}

\vspace{-0.15in}
 \caption{The mean and minimum quality as a function of the horizon size from
2 to 20 steps. The initial and final buffer levels are 30 seconds.
Buffer lower and upper bounds are 20 and 40 seconds, respectively.
The objective is to maximize the mean quality.}

\label{Flo:trend_horizon} \vspace{-0.1in}
 
\end{figure}

\begin{figure*}
\begin{centering}
\begin{minipage}[t]{0.66\columnwidth}%
\begin{center}
\includegraphics[scale=0.34]{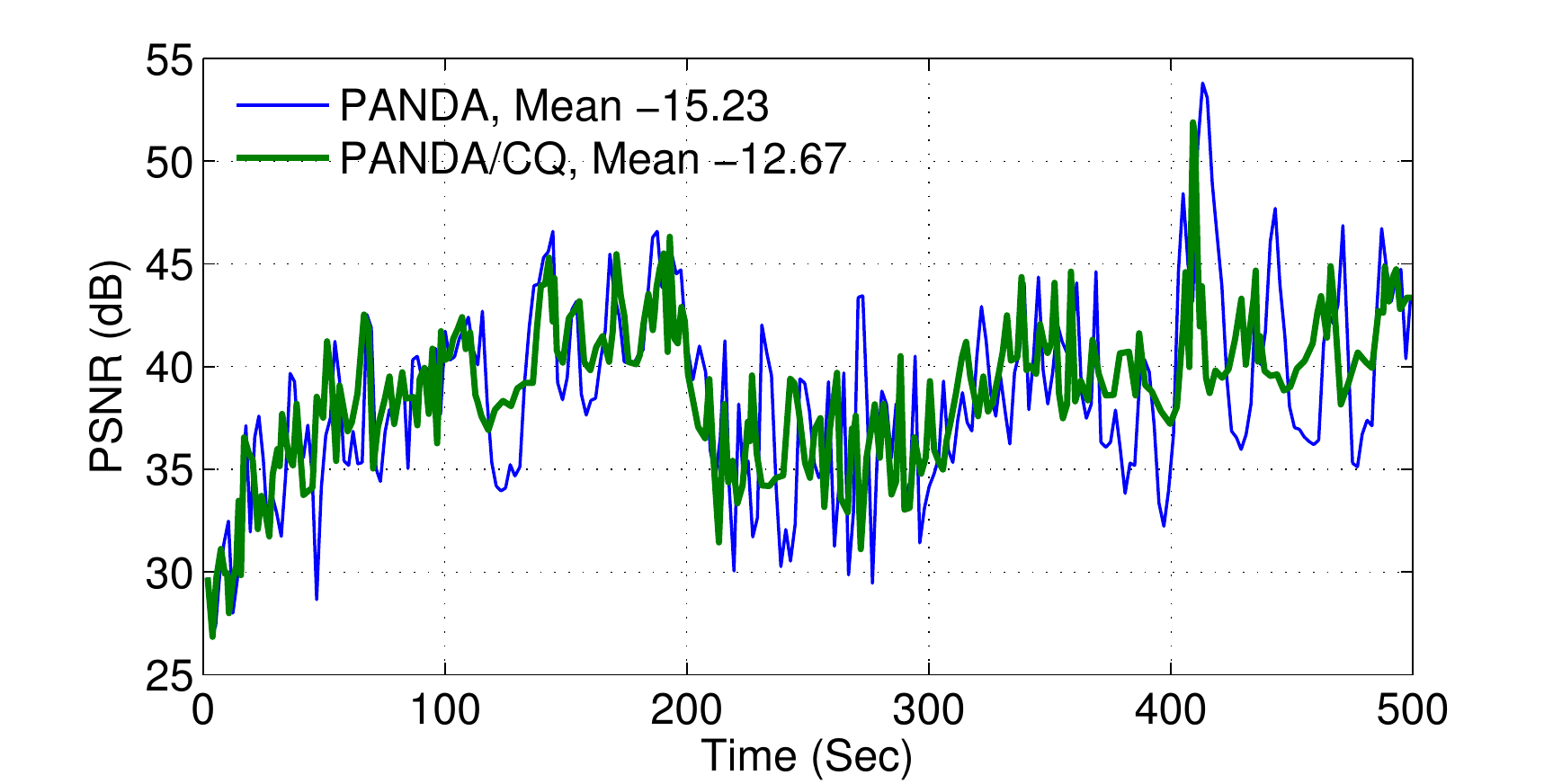} 
\par\end{center}

\begin{center}
\vspace{-0.1in}
 {\footnotesize (a) Quality }
\par\end{center}%
\end{minipage}%
\begin{minipage}[t]{0.66\columnwidth}%
\begin{center}
\includegraphics[scale=0.34]{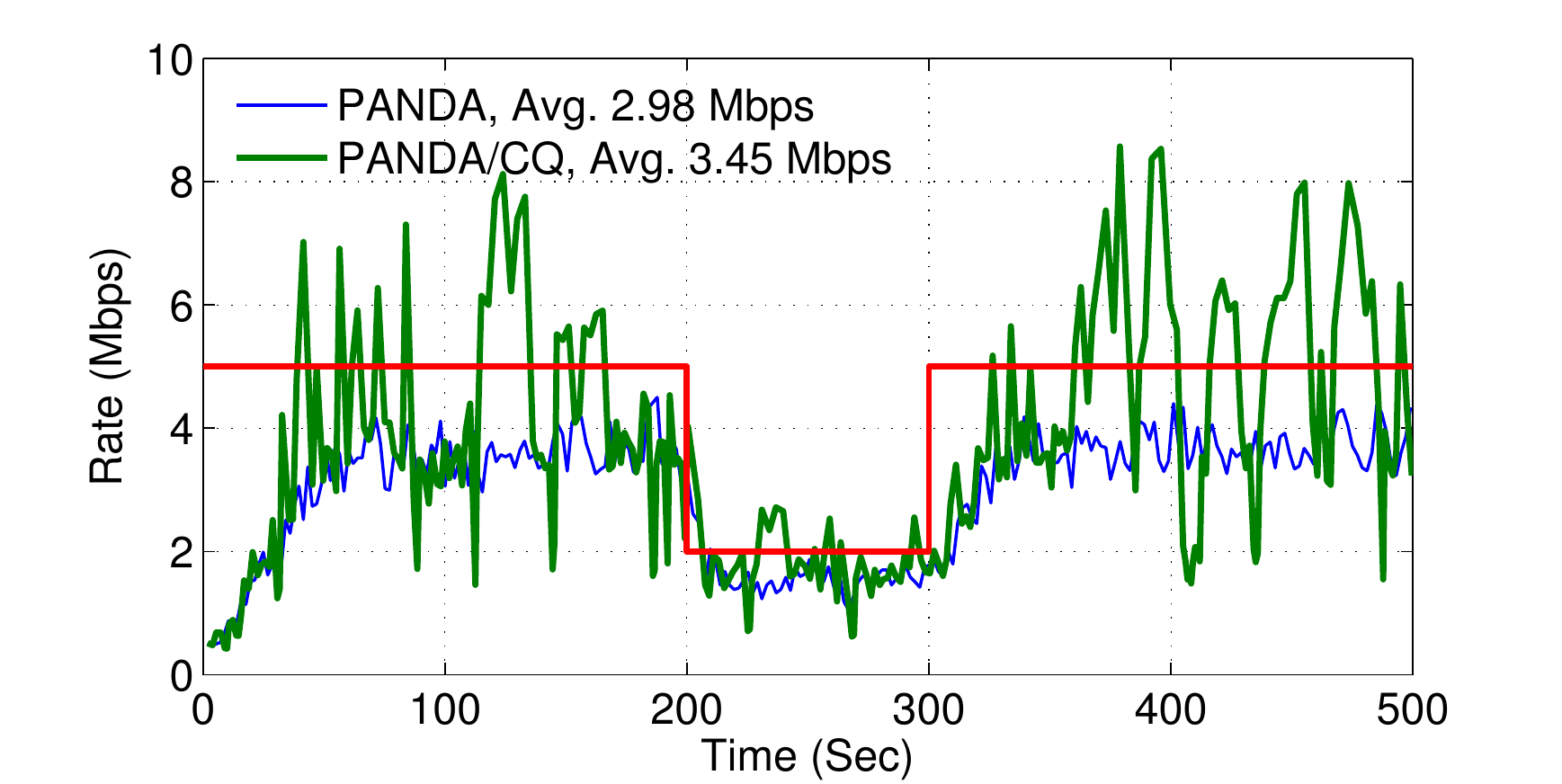} 
\par\end{center}

\begin{center}
\vspace{-0.1in}
 {\footnotesize (b) Rate }
\par\end{center}%
\end{minipage}%
\begin{minipage}[t]{0.66\columnwidth}%
\begin{center}
\includegraphics[scale=0.34]{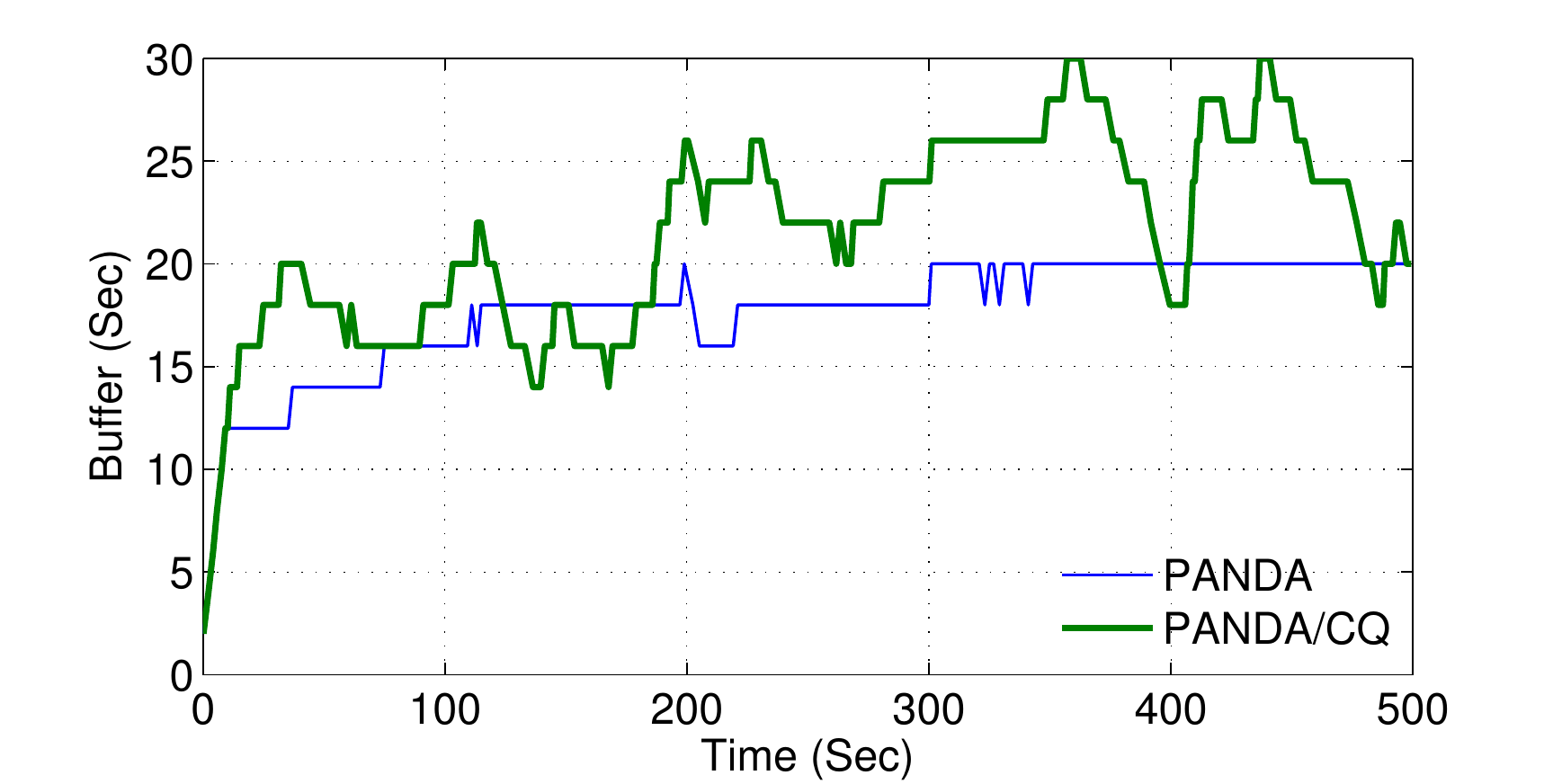} 
\par\end{center}

\begin{center}
\vspace{-0.1in}
 {\footnotesize (c) Buffer }
\par\end{center}%
\end{minipage}
\par\end{centering}

\vspace{-0.01in}

\caption{Comparing the traces of PANDA/CQ with the original PANDA algorithm
for a single client streaming with varying bandwidth. The bandwidth
is 5 Mbps for 0\textasciitilde{}200 seconds, 2 Mbps for 200\textasciitilde{}300
seconds and 5 Mbps for 300\textasciitilde{}500 seconds. Video source:
\emph{Avatar}. The reported quality in -MSE is converted to PSNR using
(\ref{eq:psnr}) for better display.}

\label{Flo:exp_pandacq_1} \vspace{-0.1in}
 
\end{figure*}

\subsection{PANDA/CQ}

Next, we integrate the dynamic programming solution and the online
algorithm into the PANDA rate adaptation algorithm, and examine the
aggregate behavior of the PANDA/CQ client. Throughout this subsection,
we use the objective of maximizing the mean quality. Two things that
we are most interested in are: 1) How does the algorithm respond to
bandwidth variation? 2) Can the PANDA/CQ client sustain similar stability
as the original PANDA?

We first examine the behavior of a single client under variable bandwidth.
We compare PANDA/CQ with PANDA under bandwidth variation from 5 Mbps
to 2 Mbps and to 5 Mbps (same setting as in \cite{panda13}). For
fairness, we set the multiplicative safety margin $\epsilon$ of PANDA
to be 0, and a lower reference buffer $B_{0}$ of 20 seconds. The
resulting traces are compared in Figure \ref{Flo:exp_pandacq_1}.
From the rate plot, we note that both algorithms are able to closely
track the bandwidth variation, thanks to the probing-and-adapt mechanism.
PANDA/CQ has a much larger variation of bitrate than PANDA, as its
rate adaptation takes into consideration the video content variability.
Accordingly, from the buffer plot, the buffer of PANDA/CQ fluctuates
within a bounded region; in contrast, the buffer of PANDA stays constant
at the reference level.

An important fact to notice from the rate plot is that PANDA/CQ has
a higher average fetching bitrate than PANDA (even with $\epsilon=0$).
The reason behind is that when PANDA/CQ plans on which segment to
fetch, it takes into consideration multiple segments in the future.
The resulting multiplexing effect creates a more continuous decision
space for the PANDA/CQ client to reduce the off-intervals as much
as it can. In contrast, in PANDA, the coarse quantization of video
bitrate leads to a very discrete decision space, resulting in large
off-intervals and low bandwidth utilization. Consequently, from the
quality plot, we can see that the mean quality gain for the PANDA/CQ
algorithm is higher than the gain noticed in the previous MATLAB simulations,
because it is not only contributed by the optimization algorithm,
but also the higher bandwidth utilization.

\begin{figure*}
\begin{centering}
\begin{minipage}[t]{0.66\columnwidth}%
\begin{center}
\includegraphics[scale=0.34]{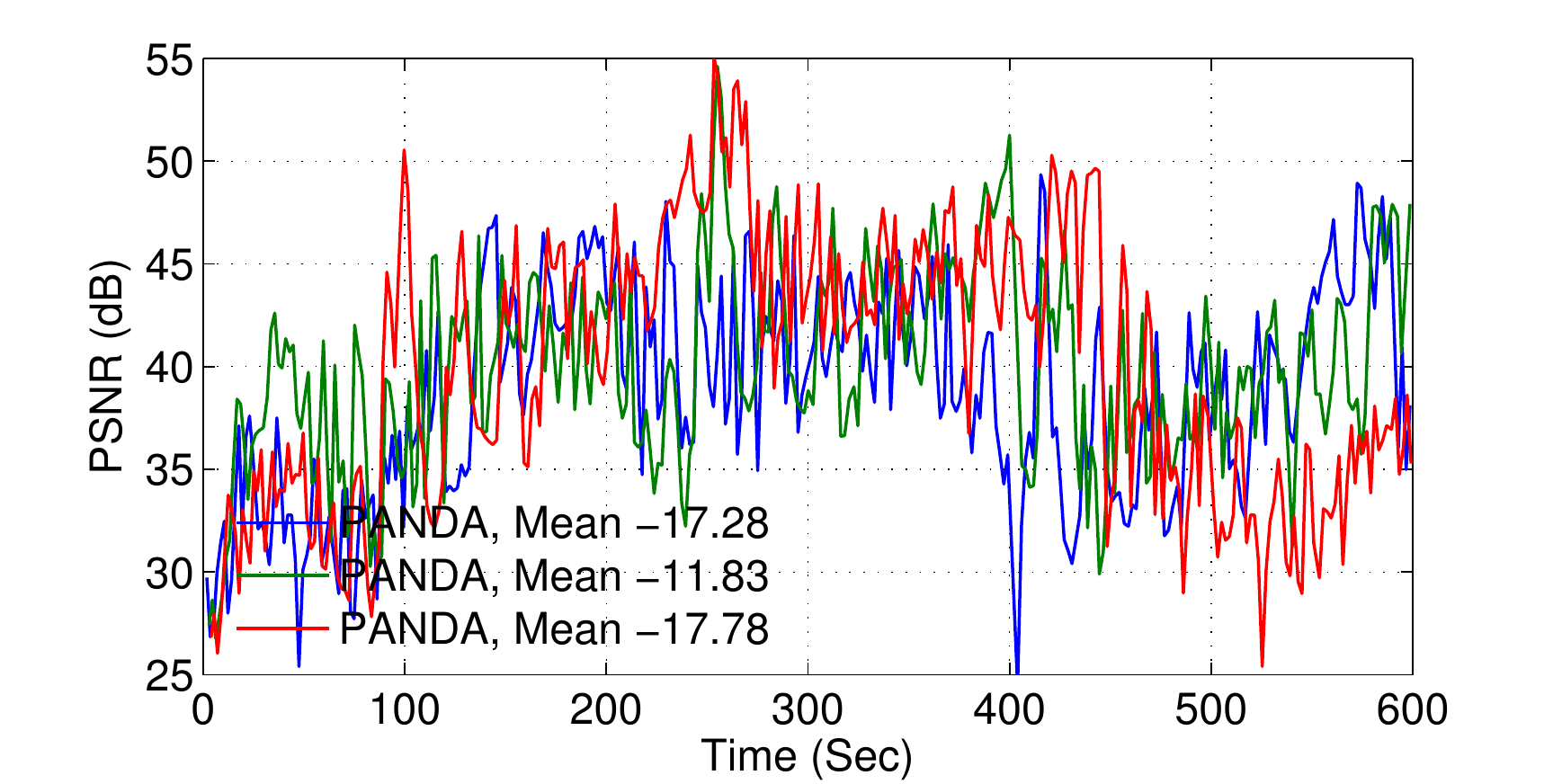} 
\par\end{center}

\begin{center}
\vspace{-0.1in}
 {\footnotesize (a1) Quality (PANDA) }
\par\end{center}%
\end{minipage}%
\begin{minipage}[t]{0.66\columnwidth}%
\begin{center}
\includegraphics[scale=0.34]{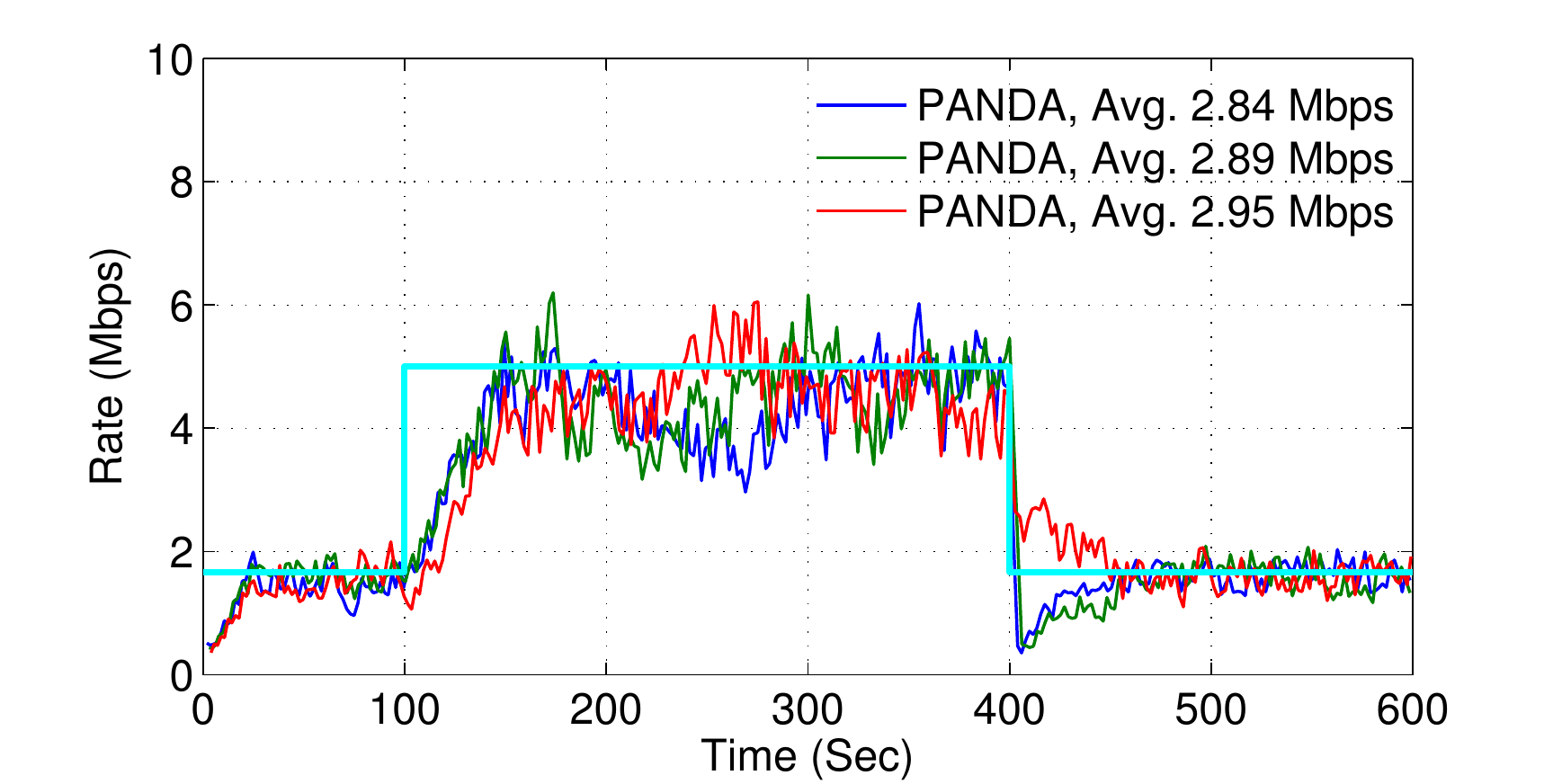} 
\par\end{center}

\begin{center}
\vspace{-0.1in}
 {\footnotesize (a2) Rate (PANDA) }
\par\end{center}%
\end{minipage}%
\begin{minipage}[t]{0.66\columnwidth}%
\begin{center}
\includegraphics[scale=0.34]{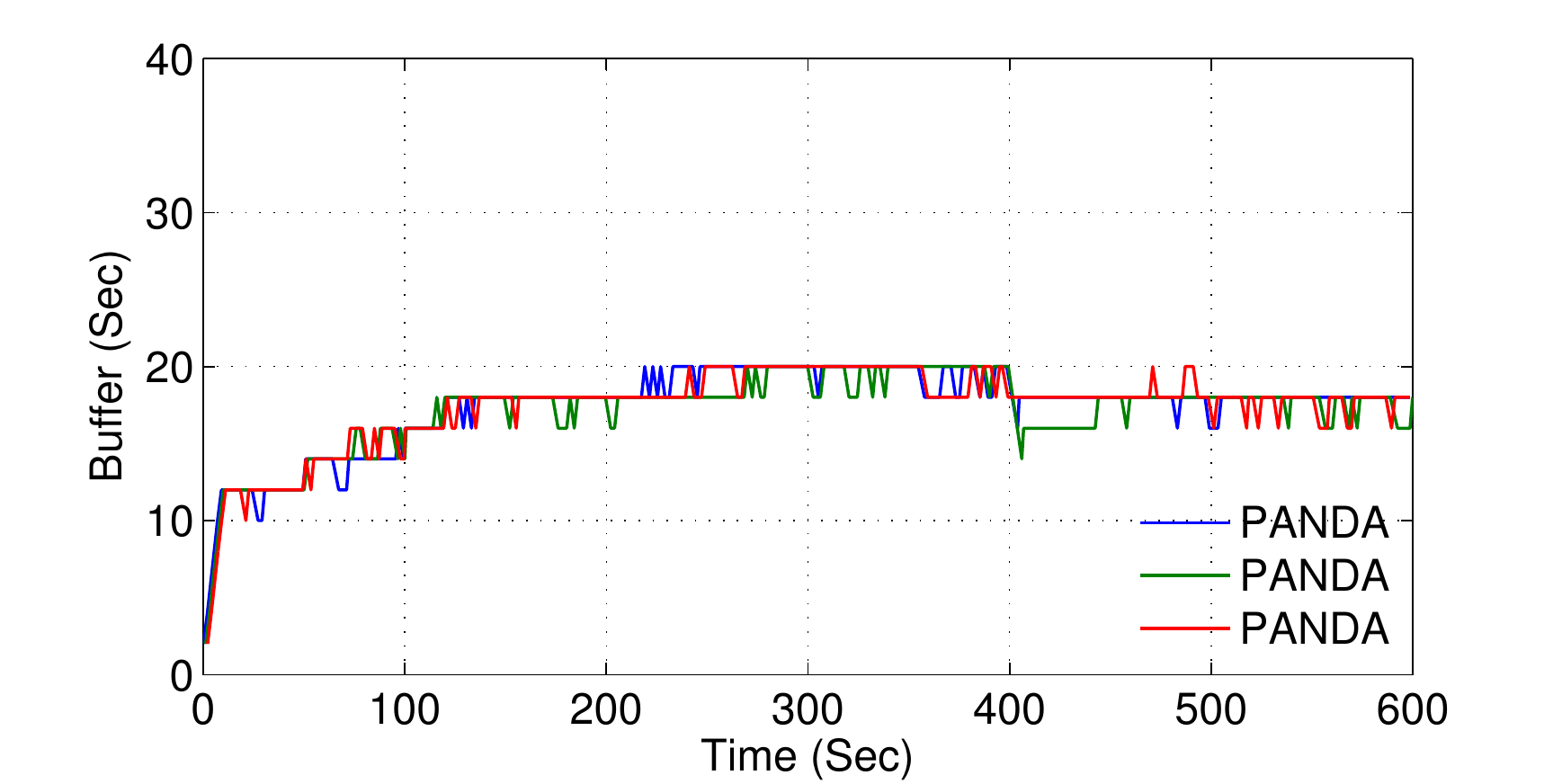} 
\par\end{center}

\begin{center}
\vspace{-0.1in}
 {\footnotesize (a3) Buffer (PANDA) }
\par\end{center}%
\end{minipage}
\par\end{centering}

\begin{centering}
\begin{minipage}[t]{0.66\columnwidth}%
\begin{center}
\includegraphics[scale=0.34]{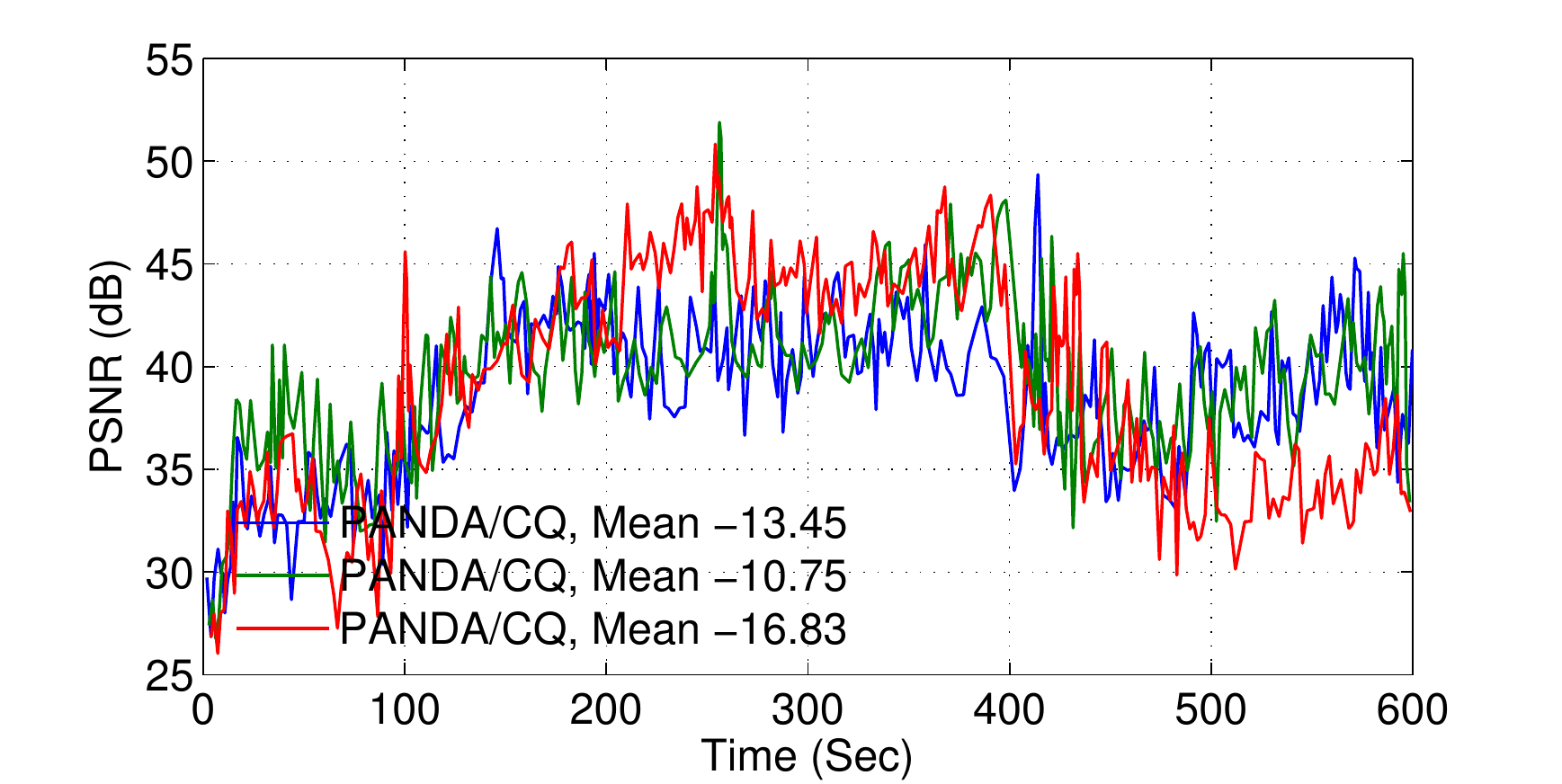} 
\par\end{center}

\begin{center}
\vspace{-0.1in}
 {\footnotesize (b1) Quality (PANDA/CQ) }
\par\end{center}%
\end{minipage}%
\begin{minipage}[t]{0.66\columnwidth}%
\begin{center}
\includegraphics[scale=0.34]{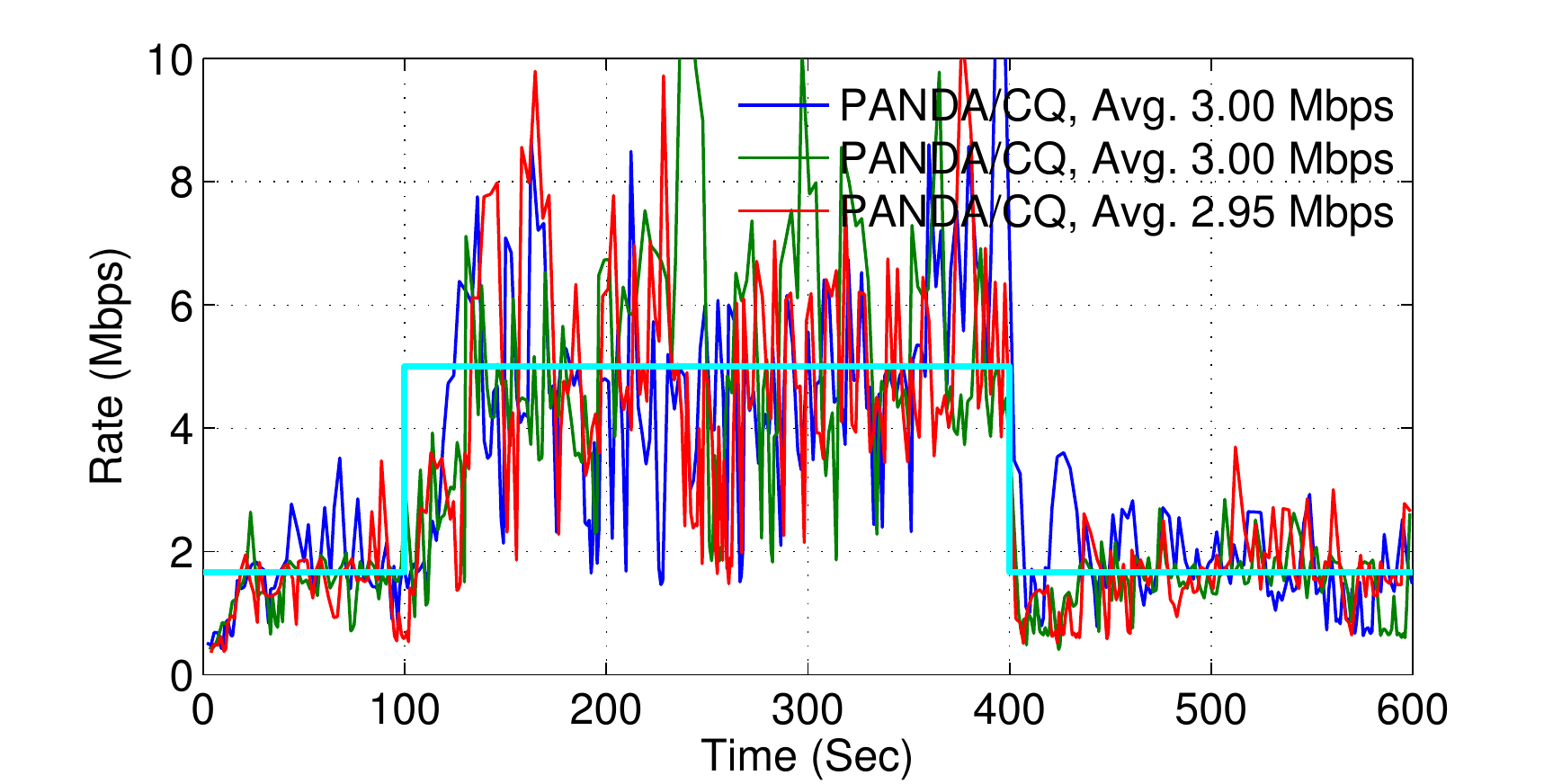} 
\par\end{center}

\begin{center}
\vspace{-0.1in}
 {\footnotesize (b2) Rate (PANDA/CQ) }
\par\end{center}%
\end{minipage}%
\begin{minipage}[t]{0.66\columnwidth}%
\begin{center}
\includegraphics[scale=0.34]{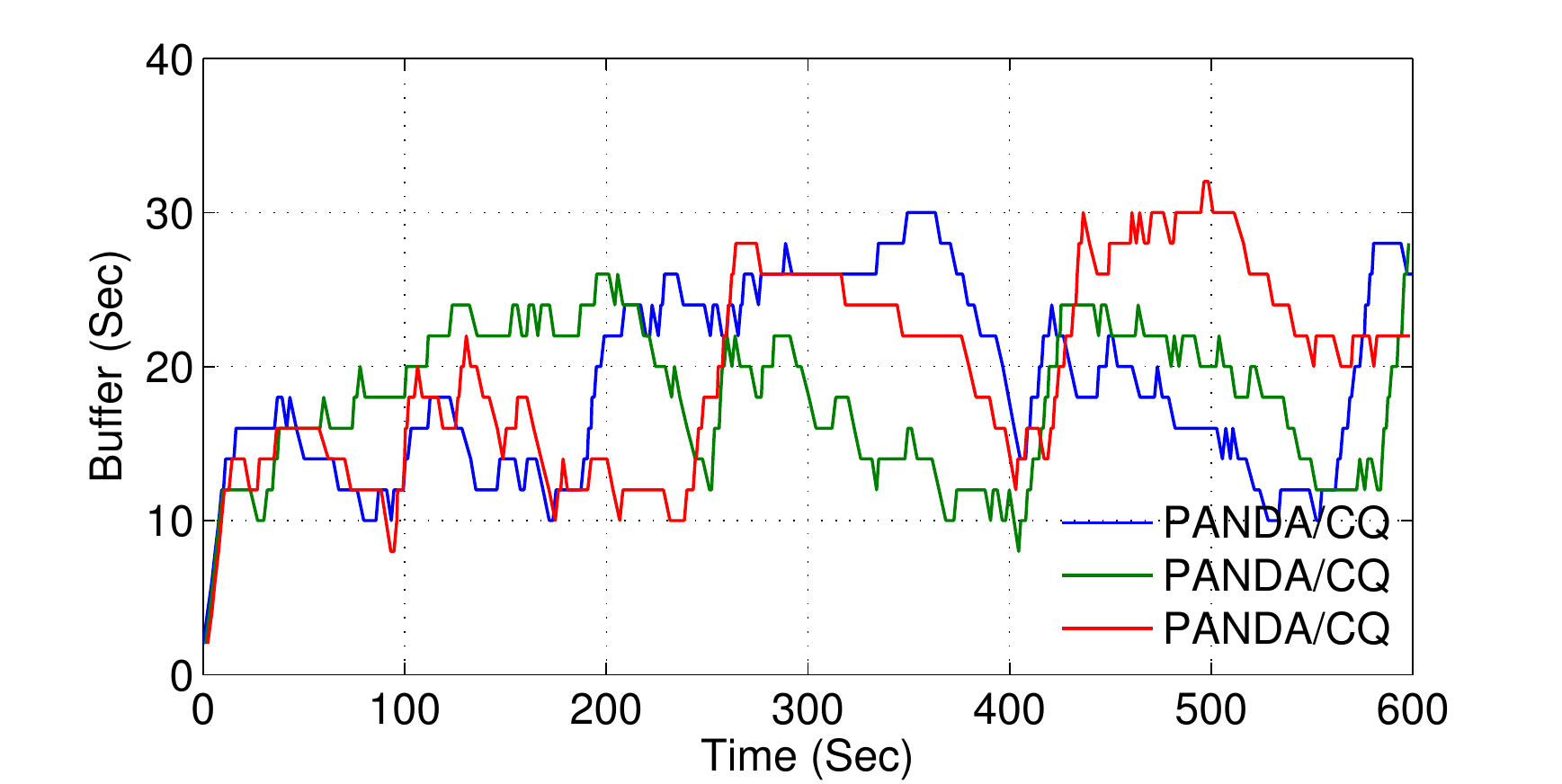} 
\par\end{center}

\begin{center}
\vspace{-0.1in}
 {\footnotesize (b3) Buffer (PANDA/CQ) }
\par\end{center}%
\end{minipage}
\par\end{centering}

\vspace{-0.01in}

\caption{Comparing the traces of PANDA/CQ with the original PANDA algorithm
for three clients sharing a link with varying bandwidth. The bandwidth
is 5 Mbps for 0\textasciitilde{}100 seconds, 15 Mbps for 100\textasciitilde{}400
seconds and 5 Mbps for 400\textasciitilde{}600 seconds. Video source:
\emph{Avatar}. Each client starts at a different position of the video
clip. The reported quality in -MSE is converted to PSNR using (\ref{eq:psnr})
for better display.}

\label{Flo:exp_pandacq_3} \vspace{-0.15in}
 
\end{figure*}

Next, we have multiple clients to compete at a bandwidth-varying link.
For each experiment, we have three PANDA or three PANDA/CQ clients
compete under bandwidth variation from 5 Mbps to 15 Mbps and to 5
Mbps. Each client start streaming the \emph{Avatar} video source from
a different position. The traces are shown in Figure \ref{Flo:exp_pandacq_3}.
From the quality plots, we can obviously observe that the PANDA/CQ
clients achieves higher mean quality and more stable quality than
the PANDA clients. The rate plots show that, similar to the single-client
case, PANDA/CQ clients are able to achieve higher bitrate and more
efficient link utilization. Lastly, from the buffer plots, we can
see that PANDA/CQ clients have their buffer fluctuate within a bounded
region whereas the PANDA clients have their buffers staying constant
at the reference level.

Lastly, we keep the link bandwidth to be constant and examine the
client behavior as we (a) vary the link bandwidth and (b) vary the
buffer lower bound $B_{L}$. We measure the 5-percentile PSNR of all
clients's downloaded segments, which considers both total quality
and quality variability. The trend plots are shown in Figure \ref{Flo:trend_pandacq_3}.
From (a), as we increase the link rate, the PANDA/CQ consistently
outperforms PANDA by more than 1 dB \emph{on average}. Note that the
\emph{worst-case} improvement is much greater (e.g., 5 dB). We find
that typically it is the worst-case improvement that dominates the
perceived visual quality. From (b), the 5-percentile PSNR decreases
as we tighten the buffer lower bound, but the minimum buffer increases.
Thus, we can see that $B_{L}$ is a parameter that controls the trade
off between video quality variability and the risk of buffer underrun.

\section{Related Work}

\label{sec:Related-Work}

\emph{Pre-HAS Video Streaming:} The literature on video streaming
techniques with quality optimization can be roughly categorized into
two eras -- the pre-HAS era and the post-HAS era. Early works (e.g.,
\cite{stuhmuller2000}) on video streaming assume generic lossy transmission
channel. For video streaming over packetized (e.g., IP) networks,
before the emergence of HAS, a common wisdom is to lay it on top of
lossy RTP/UDP to take advantage of the error-resilient nature of video
(e.g., \cite{radio06}) and apply error control as necessary. Thus,
a common theme in these works is to deal with quality degradation
caused by packet losses.

\emph{Post-HAS Video Streaming}: With the emergence of HAS, which
rides on top of TCP, packet loss is no longer a concern. Instead,
the main source of quality degradation becomes compression and downsampling
artifacts. There have been several on-going efforts trying to tackle
the video quality optimization problem for HAS, all from different
perspectives. Mehrotra and Zhao consider an approach based on rate-distortion
optimization and scalable video coding (SVC) \cite{mehrotra09}. They
formulate the problem with the buffer constraint in a way similar
to ours, and obtain a sub-optimal solution based on Lagrangian multiplier.
When attempting to extend their solution from SVC to redundantly encoded
multiple rate levels, they have noted that it yields incorrect answer
as the rate-distortion curve was not necessarily convex any more.
In contrast, our dynamic programming solution does not require convexity
in the rate-quality relationship.

In \cite{Jarnikov:SPIC11}, a Markov decision process (MDP) is used
to compute a set of optimal client strategies in order to maximize
the video quality. The MDP requires the knowledge of network conditions
and video content statistics, which may not be readily available.
Similar statistical and learning-based approaches are proposed by
Joseph and de Veciana \cite{vinay2013}. The optimality of their scheme
relies on strong statistical assumptions, such as stationary ergodicity
of the source and the channel. In contrast, as explained in the introduction
section, we have deliberately avoided a statistical model in this
work.

\begin{figure}
\begin{centering}
\begin{minipage}[t]{0.49\columnwidth}%
\begin{center}
\includegraphics[scale=0.36]{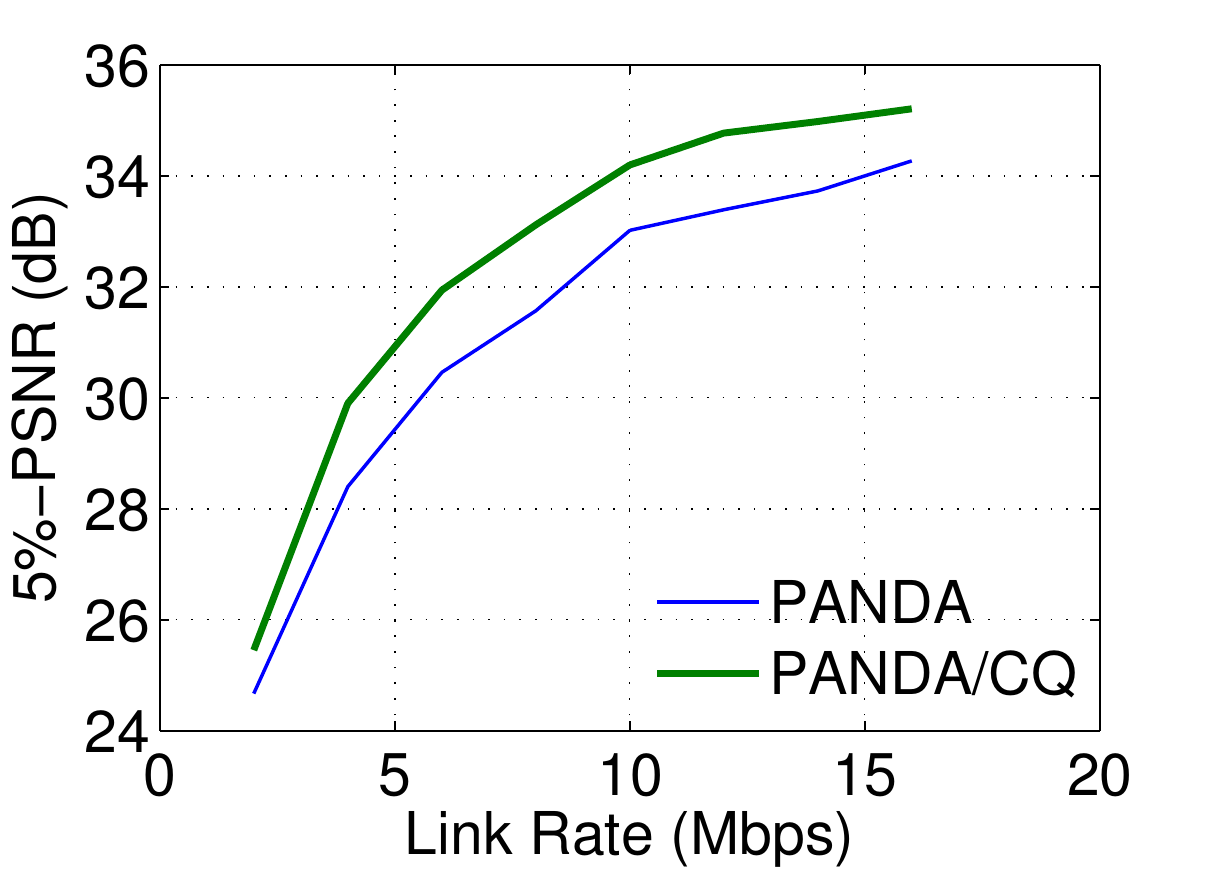} 
\par\end{center}

\begin{center}
\vspace{-0.05in}
 {\footnotesize (a) 5\%-PSNR vs. Link Rate }
\par\end{center}{\footnotesize \par}

\begin{center}
{\footnotesize \vspace{-0in}
}
\par\end{center}%
\end{minipage}%
\begin{minipage}[t]{0.49\columnwidth}%
\begin{center}
\includegraphics[scale=0.36]{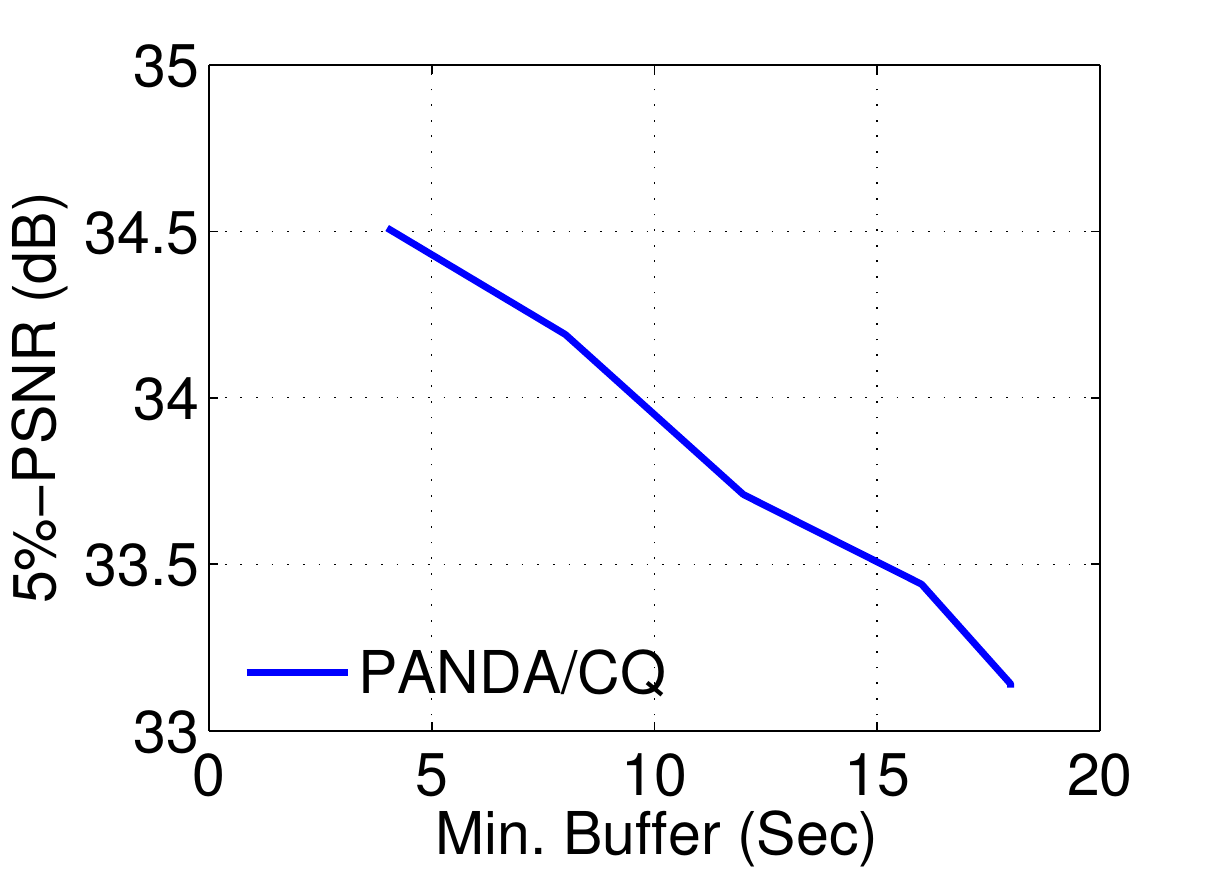} 
\par\end{center}

\begin{center}
\vspace{-0.05in}
 {\footnotesize (b) 5\%-PSNR vs. Minimum Buffer }
\par\end{center}{\footnotesize \par}

\begin{center}
{\footnotesize \vspace{-0in}
}
\par\end{center}%
\end{minipage}
\par\end{centering}

\vspace{-0.2in}
 \caption{Three clients share a constant-bitrate link. (a) 5-percentile PSNR
as a function of the link bitrate. (b) 5-percentile PSNR as a function
of the resulting minimum buffer, as we vary the buffer lower bound
from 4 seconds to 24 seconds.}

\label{Flo:trend_pandacq_3} \vspace{-0.1in}
 
\end{figure}

Crabtree et al. report the gains in terms of bitrate saved by using
a quality-optimized approach to HAS \cite{BT12}. Their technical
discussion mainly focuses on how to assemble a constant quality video
stream out of many CBR streams. Georgopoulos et al. study a network-based
approach to ensure the fairness of video quality among HAS streams
\cite{georgopoulos2013towards}. The multi-stream problem considered
is different from our work, as we focus on quality optimization within
a single stream.

There is also some ongoing standardization work in the MPEG. The DASH
working group is currently running a core experiment regarding quality-optimized
DASH streaming. The core experiment is still in progress, however,
it is expected to result in a signaling approach for carrying quality
and/or bitrate information at the segment level.

\emph{Video Quality Temporal Pooling}: On the study of temporal pooling
of video quality, a recent work \cite{bovik13} have shown that the
overall impression of a viewer towards a video is greatly influenced
by the single most severe event while the duration is neglected, which
corroborate our choice of the optimization objective. A more recent
study \cite{chen2013dynamic} dedicated to temporal pooling for HAS
proposes a more complicated linear dynamic system model with the intent
to capture the hysteresis effect in human visual response. Joseph
and de Veciana \cite{vinay2013} uses the difference between mean
quality and quality variability as the pooling metric.

\emph{Dynamic Programming: }Dynamic programming is a combinatorial
optimization technique that finds a wide range of engineering applications.
The application scenarios we have found that are most related to this
work are video encoding for CD-ROMs \cite{ortega1994optimal} and
quality control for scalable media processing \cite{wust2004quality}.

\section{Conclusion}

In this paper, we have proposed an optimization solution for streaming
video over HTTP with consistent quality. We have thoroughly examined
the designed algorithms, and integrated it into PANDA -- a practical
HAS rate adaptation algorithm for HAS deployment at large scale. The
proposed solution has the following features: 
\begin{itemize}
\item It operates independent of whether the source video is CBR or VBR-encoded. 
\item It is generic enough to cover a range of extant and new video quality
models. 
\item It explicitly takes into account the constraints of bounded client
buffer and finite horizon. 
\end{itemize}
The solution is generic and flexible enough to cover both video-on-demand
and live streaming scenarios. Our future work includes building an
end-to-end system to deliver video with consistent quality for large-scale
HAS deployments.

{\bibliographystyle{plain}
\bibliography{leeoz,ctech2012}
 } 
\end{document}